\apptocmd{\sloppy}{\hbadness 10000\relax}{}{}
\newtheorem{theorem}{theorem}[section]
\newtheorem{lemma}[theorem]{Lemma}
\newtheorem{proposition}[theorem]{Proposition}
\newtheorem{definition}[theorem]{Definition}
\theoremstyle{definition}\newtheorem{example}[theorem]{Example}
\DeclareMathOperator*{\argmax}{arg\,max}
\DeclareMathOperator*{\conv}{conv}
\newcommand{\bri}{\operatorname{BR^{I}}}\newcommand{\brii}{\operatorname{BR^{II}}}\newcommand{\I}{$\mathrm{I}$ }
\newcommand{\gi}{\operatorname{\Gamma^{I}}}\newcommand{\gii}{\operatorname{\Gamma^{II}}}\newcommand{\II}{$\mathrm{II}$ }
\renewcommand{\(}{\left(}\renewcommand{\)}{\right)}
\title{On the commitment value and commitment optimal strategies in bimatrix games}
\author[1]{Stefanos Leonardos\footnote{Stefanos Leonardos gratefully acknowledges support by a scholarship of the Alexander S. Onassis Public Benefit Foundation.}\thanks{sleonardos@math.uoa.gr}}
\author[1]{Costis Melolidakis\thanks{cmelol@math.uoa.gr}}
\affil[1]{\normalsize Department of Mathematics, National and Kapodistrian University of Athens, Panepistimioupolis GR - 157 84, Athens, Greece}
\begin{document}
\maketitle
\begin{abstract}
Given a bimatrix game, the associated leadership or commitment games are defined as the games at which one player, the leader, commits to a (possibly mixed) strategy and the other player, the follower, chooses his strategy after having observed the irrevocable commitment of the leader. Based on a result by \cite{St10}, the notions of commitment value  and  commitment optimal strategies for each player are discussed as a possible solution concept. It is shown that in non-degenerate bimatrix games (a) pure commitment optimal strategies together with the follower's best response constitute Nash equilibria, and (b) strategies that participate in a completely mixed Nash equilibrium are strictly worse than commitment optimal strategies, provided they are not matrix game optimal. For various classes of bimatrix games that generalize zero sum games, the relationship between the maximin value of the leader's payoff matrix, the Nash equilibrium payoff and the commitment optimal value is discussed. For the Traveler's Dilemma, the commitment optimal strategy and commitment value for the leader are evaluated and seem more acceptable as a solution than the unique Nash equilibrium. Finally, the relationship between commitment optimal strategies and Nash equilibria in $2 \times 2$ bimatrix games is thoroughly examined and in addition, necessary and sufficient conditions for the follower to be worse off at the equilibrium of the leadership game than at any Nash equilibrium of the simultaneous move game are provided.
\end{abstract}
\noindent \\[0.2cm] \textbf{Keywords:} Bimatrix Game, Nash Equilibrium, Subgame Perfect, Commitment Optimal, Commitment Value, Weakly Unilaterally Competitive Games,  Pure Strategy Equilibrium, \\[0.2cm]
\textbf{JEL Classification:} C72, \textbf{AMS 2010 Subject Classification} Primary: 91A05, Secondary: 91A10, 91A40.

\section{Introduction}\label{intro}
In the 1920's, when trying to formalize zero-sum games and propose a solution concept for them, both E. Borel and J. von Neumann approached two-person zero-sum games from an \emph{optimization} point of view. As payoff functions depend on both players actions, direct optimization of a player's payoff does not make sense and both Borel and von Neumann reached the concept of the players' \emph{security} or \emph{safety level}. It was defined as the best among worst possible outcomes for that player and was to be taken as the ``value of the game'' for each player\footnote{For a discussion of the details (sometimes controversial) of their contribution the reader is referred to \cite{Dim96}, \cite{Dim-Mec10}, and \cite{Kj01}.}. This formulation led to the famous minimax theorem, which, stating that the values of both players coincide, established the common value as the indisputable solution concept for these games.\par
To present this concept in their classic book, J. von Neumann and O. Morgenstern proposed two auxiliary games: The ``minorant'' game $\Gamma_1$, in which player I chooses his mixed strategy $x$ first, and then II, in full knowledge of $x$ (but not of its realization), chooses his mixed strategy $y$, and the ``majorant'' game $\Gamma_2$, in which the order of the players' moves is reversed. This scheme was proposed so that the optimization of each player's utility would make sense: 
\begin{quote}
\emph{The introduction of these two games $\Gamma_1$, $\Gamma_2$ achieves this: It ought to be evident by common sense -- and we shall also establish it by an exact discussion -- that for $\Gamma_1$, $\Gamma_2$ the ``best way of playing''-- i.e. the concept of rational behavior -- has a clear meaning\footnote{For more on this subject, see chapters 14 and 17 of \cite{Mo53} -- the excerpt is from p.100.}.}
\end{quote}
Hence, to be able to use individual rationality (i.e. the maximization of a player's utility) in deriving a solution, a leader-follower scheme was utilized and the notion of a common safety (or security) level as the solution evolved naturally from optimality considerations. \par

However, generalizing this approach to two person, non zero-sum, non-cooperative games came to a dead end. The reason is that implicit to the Borel-von Neumann approach are three different points of view which do not necessarily agree on (non zero-sum) bimatrix games. The first point of view is that of the players optimizing against the worst possible outcome. For zero-sum games, individual rationality of the opponent coincides with assuming that he is out there to destroy us, but this assumption seems unreasonable for non-zero sum games. The second point of view is that of the players optimizing against the mixed strategy of each other simultaneously, where both mixed strategies are considered to be known. Although this seemed difficult to accept from the optimality point of view, it was this approach that later led to the widely accepted Nash equilibrium concept, which addressed the problem from an \emph{equilibrium} rather than an optimization perspective. The third point of view is that of the players optimizing in a leader-follower sequence, i.e. the follower optimizing against a known \emph{mixed} strategy of the leader, and thus obtain a ``value of the game'' for the leader. Applied to bimatrix games and assuming that irrevocable commitment on mixed strategies by the players is possible, the objections one may raise to the third point of view are firstly, what factors determine the order of play for the players and secondly, how is the leader going to optimize if the follower has a non-unique best response? \par

The problems that emerge in the leader-follower approach as well as the comparison of simultaneous and sequential move versions of the same underlying game, the latter under various assumptions on the sense of commitment of the leader, have been a topic of continuous research in the game theory literature. What these comparisons actually do is more or less investigating the relationship between the three ``points of view'' we talked about, often for specific classes of games. Since this literature has a wide variety of themes, we will refer only to papers related to the particular questions we examine in the present paper. All such references are cited in detail and in context to each particular question raised in the main text. However, even if it is not our topic, a small discussion of the rationale behind the leader-follower approach in bimatrix games and of the two objections mentioned above should be helpful in appreciating the rest of this paper.\par

\Citet{Ro91} examines normal form bimatrix games and related sequential versions, where one of the players commits to any of his (mixed) strategies and this commitment becomes known to the other player prior to his strategy selection. Rosenthal defines \textit{commitment-robust} equilibria to be Nash equilibria of the simultaneous move game that are subgame perfect equilibria of both the related sequential move versions and argues that Nash equilibria that fail this property ought to be questionable, if there is sufficient flexibility in the rules of the game.\par 

In an attempt to deal with the choice of the leader/follower, i.e. the first objection noted above, \cite{Ha93} propose a two stage generated model for $2 \times 2$ bimatrix games that determines ``endogenously'' whether the game will be played sequentially and, in that case, the ordering of the two players. Using this model for non-degenerate bimatrix games, \cite{Hu96} address a question related to that raised by \cite{Ro91}: \emph{when is an equilibrium of the original game an equilibrium of the generated game also?} Such equilibria are termed ``viable''. They prove that an equilibrium of the original game in \emph{mixed} strategies is viable if and only if no player has an incentive to move first in the ordinary, sequential move, commitment game. It is clear that the Hamilton-Slutsky two stage generated model was chosen by van Damme and Hurkens as a tool to deal with the ordering of the players. However, although the imposition of a super-game over the original one, with its own rules and assumptions, leads to interesting results, it is questionable whether it actually resolves the issue and should be preferable to the original setup. It is our opinion that an answer that would settle this objection conclusively has yet to be proposed.\par

The second objection raised when one approaches bimatrix games by a leader-follower approach, namely that of the leader's choice when the follower has more than one best responses to the leader's strategy, has been settled by \citet{St10}. Prior to that publication, the prevailing approach was to consider either the case where the follower chooses his best response so as to accommodate the leader or (on the contrary) he chooses his best response so as to harm the leader\footnote{For a detailed exposition of the theory developed in the latter case, see Chapter 3.6,``The Stackelberg Equilibrium Solution'', of \citet{Bol99}.}. It is questionable whether both these approaches agree with the principle of individual rationality of the follower and, anyway, they leave the question of real play for the leader open: how could his play depend on the \emph{assumption} that the follower would play in this or that fashion? Certainly, the question pertains to degenerate bimatrix games, but yet, ignoring degeneracy is not a satisfactory answer. \Citet{St10} show that when a bimatrix game is played sequentially with the mixed strategy of the leader being observed by the follower before he makes his move, then all subgame perfect equilibria payoffs for the leader form a closed interval of the form $[\alpha^L, \alpha^H]$, where $\alpha^L$ is a payoff that the leader \emph{can guarantee} (i.e. induce the follower to give a best response that results to $\alpha^L$). This interval collapses to a point when the bimatrix game is non-degenerate.\par

An important implication of their result is that, for bimatrix games, $\alpha^L$ is precisely the safety level of the leader, i.e. the ``value'' of the minorant game or majorant game (according to who plays first) in the \citet{Mo53} sense. However, these two values will generally not be the same and of course, a player's payoff will be different on the inducible subgame perfect equilibrium if he plays first or second. We shall call the safety level of the leader in the minorant game  (resp. majorant) \emph{commitment value for player I (resp. player II)} and any strategy of the leader that guarantees him his commitment value \emph{commitment optimal}. \par

Based on these observations, there are three notions that their relationship needs to be studied further: matrix values and corresponding max-min strategies in the individual matrix games, Nash equilibria and their payoffs in the simultaneous move bimatrix game and finally, commitment values and commitment optimal strategies in the minorant/majorant games. \par

In the present paper, we try to investigate these relationships. Our results are far from being complete, however we got some interesting characterizations. Firstly, we derive two general properties of Nash equilibria and commitment optimal strategies. We show that when the leader has a pure commitment optimal strategy in a non-degenerate leader-follower game, then this strategy together with the follower's best response form a Nash equilibrium in pure strategies in the underlying bimatrix game. Hence, if it is possible to improve upon his Nash equilibria payoffs, the leader must use mixed strategies in the leader-follower non-degenerate game, which underlines the importance of mixed strategies per se. This result is not true in the case of degeneracy. Also, we show that in a non-degenerate bimatrix game a player can strictly improve his payoff at a completely mixed Nash equilibrium by commitment, provided his Nash equilibrium strategy is not matrix game optimal (i.e. maximin). \par
Secondly, taking player I to be the leader (without loss of generality) and letting $v_A$ denote the value of the leader's matrix game, we discuss the validity of the equation $v_A=\alpha^L=\alpha^H$ for various classes of bimatrix games that are viewed as generalizations of zero-sum games.\footnote{Observe that if this equation is valid, then all Nash equilibria payoffs for the leader will also be equal to $v_A$ since, as \citet{St10} show, the payoff for the leader at any Nash equilibrium of the bimatrix game is less or equal to $\alpha^H$ .} Of course, this equation is true for zero sum games. We show that the equation obtains for weakly unilaterally competitive games, but is false for other generalizations of zero-sum games, such us almost strictly competitive games, pre-tight, best response equivalent to zero sum games, and strategically zero sum games. Of particular interest is the case where the payoff at all Nash equilibria of a bimatrix game is equal to the value of the leader's matrix game $v_A$ but less than his commitment value. The Nash equilibrium is a questionable solution concept for such games. \par
Thirdly, we discuss the game known as the Traveler's Dilemma-TrD (see the pertinent section for references on TrD), which is a typical example of a bimatrix game whose unique Nash equilibrium is compelling under standard equilibrium arguments (e.g. domination of strategies) but  unattractive under both optimality considerations and common sense, and also unsupported by experimental game theory\footnote{Concerning experiments on TrD, one is referred to \citet{Ba11}. On an experiment in which the participants were members of the Game Theory Society, see \citet{Be05}.}. For the TrD, we derive the commitment value and the commitment optimal strategy of the players (the same for both due to symmetry), which are very close both to the Pareto optimal outcome and to the behavior of game participants in experiments. It is noteworthy that the follower's payoff is then identical to that of the leader and thus, the symmetry of the game is preserved by the leader-follower solution. TrD is an example of a game where the payoff at Nash equilibria equals to the matrix game value of the leader which is strictly less than his commitment value. \par
Finally, for the case of $2 \times 2$ bimatrix games, we exhaustively examine the relationship between commitment optimal strategies of the leader, optimal responses of the follower and Nash equilibrium strategies of both. We also provide necessary and sufficient conditions for the follower to be worse off at the equilibrium of the leadership game than at any Nash equilibrium of the simultaneous move game.

\subsection{Outline}
The rest of the paper is structured as follows. In section \ref{definitions}, formal definitions and notations for bimatrix games and their associated leadership games are given. In addition, some results from \cite{St10} are presented, which are used subsequently. Section \ref{properties} presents properties and compares matrix game values, Nash equilibria payoffs and commitment values. In section \ref{classes}, special classes of bimatrix games that have been proposed as generalizations of zero-sum games are examined with respect to the relationship between maximin, optimal commitment and Nash equilibria strategies. In section \ref{trd}, we discuss the traveler's dilemma. Finally, section \ref{twobim} is devoted to $2\times 2$ bimatrix games, discussing the relationship between commitment optimal strategies and Nash equilibrium strategies, and also comparing the payoffs of the follower in these two cases.

\section{Definitions}\label{definitions}
We consider the mixed extension of an $m \times n$ bimatrix game $(A,B)$ played by players I and II, which we denote shortly by $\Gamma$. The sets of pure strategies of the players are $I=\{1,2,\ldots, m\}, J=\{m+1,m+2,\ldots,m+n\}$ respectively. A pure strategy of player I will be denoted by $s_i$ or simply by $i$, for $i=1,2,\ldots,m$, when confusion may not arise. Similarly, a pure strategy of player II will be denoted by $t_j$ or simply by $j$, for $j=m+1,m+2,\ldots,m+n$. The sets of mixed strategies of player I (resp. player II) will be denoted by $X$ (resp. $Y$), where $X= \Delta (m)$ and $Y=\Delta (n)$ are the $m-1$ dimensional and $n-1$ dimensional probability simplexes. For $x\in X$ and $y \in Y$, the payoffs of player I and II are given by $\alpha\(x,y\):=x^TAy$ and $\beta\(x,y\):=x^TBy$. The value of the matrix game of player I will be denoted by $v_A$. It could be considered as I's safety level, since I may guarantee the value of the matrix game $A$ no matter what strategy player II chooses; however we will avoid this interpretation in view of our previous discussion on commitment value. Of course, $v_A = \max_{x\in X}\min_{j\in J}\alpha\(x,j\)$. The corresponding quantity for player II is $v_B = \max_{y\in Y}\min_{i \in I}\beta\(i,y\)$, the value of $B^T$. Given a strategy $x \in X$ of player I, a strategy $y$ of player II is a best response to $x$ if $\beta\(x,y\)=\max_{y'\in Y}\beta\(x,y'\)$. In symbols, $\brii\(x\):=\{y\in Y: \beta\(x,y\)\ge \beta\(x,y'\), \forall y' \in Y\}$. Note, that $\brii\(x\)=\conv\left\{j\in J: j\in \brii\(x\)\right\}$. A strategy $x\in X$ is strongly dominated by a strategy $x'\in X$ if $\alpha\(x',y\)>\alpha\(x,y\)$ for all $y \in Y$ and weakly dominated by a strategy $x''\in X$ if $\alpha\(x'',y\)\ge \alpha\(x,y\)$ for all $y\in Y$, with strict inequality for at least one $y\in Y$.\par

For any $j\in J$, the best reply region of $j$ is the set $X\(j\)\subseteq X$ on which $j$ is a best reply, i.e. $X\(j\):=\left\{x\in X: j\in \brii\(x\)\right\}$. The sets $Y\(i\)$ for $i\in I$ are defined similarly. The edges of $X\(j\)$ are denoted by $C\(j\)$. By standard convexity arguments, $\argmax_{x\in X\(j\)}\alpha\(x,j\)\cap {C\(j\)} \neq \emptyset$ in any bimatrix game. We say that $X\(j\)$ is full-dimensional, if there is $x\in X\(j\)$ such that $x_i>0$ for all $i=1,2,\ldots,m$, i.e. if the interior $X^{\text{o}}\(j\)$ is not empty. Let $D:=\{j\in J: X^{\text{o}}\(j\)\neq \emptyset\}$. \Citet{St04} show that $j \in D$ if and only if $j$ is not (weakly) dominated. For $j\in J$, let $\mathcal E\(j\)=\{k\in J: \beta\(\cdot,j\)\equiv \beta\(\cdot,k\)\}$, i.e. $\mathcal E\(j\)$ denotes the set of all pure strategies $k \in J, k\neq j$, that are payoff equivalent to $j$.\par 

Given a bimatrix game $\Gamma$, we consider two associated \emph{leadership (leader-follower)} or \emph{commitment games}, denoted by $\gi$ and $\gii$. We define $\gi$ to be the game at which player I is the leader, i.e. he moves first and commits to a strategy, possibly mixed, and player II is the follower, i.e. he moves second, after having observed the strategy choice of player I. Formally, the strategy set of player I is $X$ as in $\Gamma$, while the strategy set of player II is the set of measurable functions $f:X\mapsto Y$. The payoffs of the players in $\gi$ are determined by their payoff functions in $\Gamma$, that is $\alpha\(x,f\(x\)\)=x^TAf\(x\)$ and $\beta\(x,f\(x\)\)=x^TBf\(x\)$. When considering best replies of player II, we will usually restrict attention to pure strategies since there is no need for the follower to employ a mixed strategy. Similarly, we define $\gii$ as the game where player II moves first and player I moves second. In what follows, we use $\gi$ with player I as the leader in a generic fashion, but, unless stated otherwise, results apply also to $\gii$.\par

A bimatrix game is non-degenerate if no mixed strategy of any player has more pure best replies than the size of its support. When considering the game $\gi$ we will need this property to hold only for mixed strategies of the player that moves first, i.e. player I. This motivates the following definition
\begin{definition}\label{nond}
A bimatrix game $\Gamma$ is non-degenerate for player $i$, for $i=\{\mathrm{I, II}\}$, if no mixed strategy of player $i$ has more pure best replies among the strategies of player $j$ than the size of its support.
\end{definition}
For the leadership games $\gi$ (and $\gii$), we use the subgame perfect equilibrium as a solution concept, under which an optimal strategy $f\(\cdot\)$ of the follower is a best reply to any $x\in X$. For the simultaneous move game $\Gamma$, we will be interested in Nash equilibria and, in some cases, in correlated equilibria (\cite{Au74}) or in coarse correlated equilibria (\cite{Mo78}). \par

A Nash equilibrium strategy profile will be denoted by $\(x^N,y^N\)$, with $x^N\in X$ and $y^N\in Y$. The set of all Nash equilibria strategy profiles of the bimatrix game $\Gamma$ will be denoted by $NE\(\Gamma\)$ and the set of all Nash equilibria payoffs by $NE\Pi\(\Gamma\)$. The payoffs of player I and II at a Nash equilibrium will be denoted by $\alpha^N:=\alpha\(x^N,y^N\)$ and $\beta^N:=\beta\(x^N, y^N\)$ respectively. Finally, we write $NE\(X\), \(NE\(Y\)\)$ for the set of all strategies of player I (resp. II) that participate in some Nash equilibrium of the simultaneous move game.\par 

\subsection{Existing results: equilibrium payoffs of the leader}\label{exist}
The present work builds upon results that appeared recently in the literature, some of which we present here. \Citet{St10} prove that in a degenerate bimatrix game, the subgame perfect equilibria payoffs of the leader form an interval $\left[\alpha^L, \alpha^H\right]$. The lowest leader equilibrium payoff $\alpha^L$ is given by the expression 
\begin{equation}\label{alphaL}\alpha^L=\max_{j\in D}\max_{x\in X\(j\)}\min_{k\in \mathcal E\(j\)}\alpha\(x,k\)\end{equation}
and the highest leader equilibrium payoff $\alpha^H$ is given by 
\begin{equation}\label{alphaH}\alpha^H=\max_{x\in X}\max_{j\in \brii\(x\)}\alpha\(x,j\)=\max_{j \in J}\max_{x\in X\(j\)}\alpha\(x,j\) \end{equation}
If the game is non-degenerate (or non-degenerate for the leader), then the leader has a unique subgame perfect equilibrium payoff in $\gi$. In this case, the expressions in (\ref{alphaL}) and (\ref{alphaH}) coincide. In fact, less than non-degeneracy is required for the equality (hence uniqueness) to hold. If any best reply region is full-dimensional or empty and if there are no payoff equivalent strategies, i.e. if $\mathcal E\(j\)=\left\{j\right\}$ for any $j\in J$, then expression \eqref{alphaL} yields the same value as expression \eqref{alphaH}, i.e. $\alpha^L=\max_{j\in J}\max_{x\in X\(j\)}\alpha\(x,j\)=\alpha^H$. See also Example \ref{-321-303} in section \ref{cpure} for such a case. \par

For the non-degenerate case, the intuition behind deriving the unique leader equilibrium payoff $\alpha^L=\alpha^H$ is the following. The leader commits to a strategy $x\in X$ and the follower gives a best response, which the leader may force to be the best possible for him among player II's best responses. This follows from an $\epsilon$-argument. As a sketch, for any strategy $x \in X$ that admits more that one pure best replies, player I may sacrifice an $\epsilon>0$ and move to a nearby strategy that admits a unique pure best reply. \Citet{St04} call such a strategy of the follower \textit{inducible}, in the sense that by sacrificing $\epsilon>0$ the leader may induce the follower to use it. By the non-degeneracy property, this can be done for \textit{any} pure best reply of the follower against $x$, for any $x\in X$. In equilibrium there is no sacrifice and the result obtains. \par

In non-degenerate bimatrix games, the unique leader equilibrium payoff is identified as the highest payoff for that player on an edge of his part of the Lemke-Howson diagram (meaning the part of the diagram on this player's simplex).\par

In games that are degenerate or at least degenerate for the leader (cf. definition \ref{nond}), the reason for the possible difference between the lowest ($\alpha^L$) and highest ($\alpha^H$) leader equilibrium payoff is that not all pure best replies of the follower may be inducible. This is so for the case of weakly dominated strategies, which have not full-dimensional best reply regions and for the case of payoff equivalent strategies which have best reply regions that fully coincide. Again, $\alpha^H$ corresponds to the best possible subgame equilibrium payoff for the leader. To obtain the lowest leader payoff $\alpha^L$, \cite{St04} reduce the original game to a game that is non-degenerate for the leader, by ignoring his payoffs against weakly dominated strategies and solving for his safety level against payoff equivalent strategies of the follower. The unique leader equilibrium payoff in this reduced, non-degenerate (for the leader) game, is now the lowest payoff $\alpha^L$ that he may guarantee in the original leadership game.\par

If $l$ denotes the lowest and $h$ the highest Nash equilibrium payoff of player I in $\Gamma$, \citet{St10} show that $l\le\alpha^L$ and\footnote{The last inequality holds also for any correlated equilibrium payoff, i.e. the highest leader payoff of a player is at least as high as his highest correlated equilibrium payoff in the simultaneous move game.}  $h\le \alpha^H$. Together with the trivial inequality $v_A\le l\le h$, their result establishes lower and upper bounds for the Nash equilibria payoffs of a player. So, in degenerate games with 
$\alpha^L<\alpha^H$
\begin{align}\begin{split}\label{bounddeg} v_A&\le l \le \alpha^L \\[0.2cm] v_A&\le l \le h \le \alpha^H\end{split}\end{align}
while for non-degenerate games, since $\alpha^L=\alpha^H$, (\ref{bounddeg}) simplifies to
\begin{equation}\label{boundndeg}v_A\le l\le h\le \alpha^L\end{equation}\par

We have already noted that $\alpha^L$ should not be viewed as the lowest subgame perfect equilibrium payoff for the leader, but rather as his commitment value in the leadership game, i.e. a payoff that he may guarantee assuming the other player is rational (utility maximizer). 

\subsection{Motivation: Safety levels and Nash equilibria payoffs}
Hence, applied to bimatrix games, the optimization point of view leads to $\alpha^L$ as a candidate for the generalization of the notion of safety level. As we saw, \citet{Mo53} take the leader-follower approach to discuss the safety level $v_A$ in matrix games, for which of course $v_A=\alpha^L$. At first sight, the payoff $\alpha^L$ refers to a different game $\gi$ than the payoffs $\alpha^N, v_A$ with which we want to compare it. However, as \citet{Co16} argues, the leadership equilibrium should be viewed as a distinct solution concept for the game $\Gamma$ itself and not as an application of the Nash equilibrium concept on the different game $\gi$. So, one wonders whether a theory can be developed for the solution of non-zero sum games which will not originate from the equilibrium point of view but will be based on the leader-follower approach, which is generated from the optimization point of view. In that case, $\alpha^L$ will play a central role for the case of bimatrix games.\par
 
Based on our discussion thus far and on equations \eqref{bounddeg} and \eqref{boundndeg}, one is motivated to raise certain questions. Firstly, is it possible to characterize all bimatrix games for which $h\le \alpha^L$? This inequality, which already holds for non-degenerate games, leads to discarding all Nash equilibria if the rules of the game permit commitment, as is often the case. Certainly, in that case a new problem appears: How will the leader be chosen among the two players? Secondly, for which classes of bimatrix games does equation \eqref{boundndeg} collapse to $v_A=\alpha^L$, i.e. the leader may not guarantee more than the safety level of his payoff matrix? 

\section{Nash Equilibria, maximin strategies and commitment optimal stra\-tegies}\label{properties}
Along with the leader payoffs at subgame perfect equilibria of $\gi$, we want to study his equilibrium strategies. In agreement with the notation $\alpha^L$ and $\alpha^H$, we denote by $x^L$ and $x^H$ the strategies that the leader uses to attain these payoffs. However, $x^L$ and $x^H$ may be non-unique.
\begin{example} In the $3\times3$ bimatrix game $\Gamma$ with payoff matrices
\[A=\begin{pmatrix*}[r] 4 & 1 & 0\\ 3  & 2 & 0 \\ 0 & 0 & 3.5\end{pmatrix*}\qquad B=\begin{pmatrix*}[r] 1 & 2 & 0\\ 4 & 3 & 0 \\ 0 & 0 & 1\end{pmatrix*}\]
$\alpha^L=\alpha^H=3.5$, but this payoff can be achieved by player I (the leader) in $\gi$ with two different commitment strategies: either $x^L_1=\(\frac12,\frac12,0\)$, which induces the follower to play $t_4$ or $x^L_2=\(0,0,1\)$ which induces the follower to play $t_6$. This is despite the fact that the game is non-degenerate. 
\end{example}
To proceed, we give a formal definition of commitment value and commitment optimal strategies.
\begin{definition}\label{comm}
Let $\Gamma$ be a bimatrix game and let $\gi$ and $\gii$ be the associated leadership (leader-follower) or commitment games, where \I is the leader in $\gi$ and \II is the leader in $\gii$. Then, the leader's inducible lower subgame perfect equilibrium payoff $\alpha^L$ in $\gi$ (resp. in $\gii$) will be called commitment value for \I (resp. $\mathrm{II}$). A strategy that guarantees his commitment value to a player will be called commitment optimal.
\end{definition}
As the example above shows, the set of commitment optimal strategies for a player need not be a singleton. We will take player I as the ``default'' leader player and we will denote by $X^L$ the set of his \emph{commitment optimal} strategies, with generic element $x^L \in X^L$. A strategy that the leader may induce the follower to use when playing commitment optimally\footnote{For a given $x^L$, $j^F(x^L)$ is actually not unique. However, on the set of those $j^F$s the follower's payoff is constant, so we don't really care about the non-uniqueness of $j^F(x^L)$.} by using $x^L\in X^L$ will be denoted by $j^F(x^L)$ and the corresponding payoff of the follower will be denoted by $\beta^F(x^L)$ or simply $\beta^F$, i.e. $\beta^F:=\beta\(x^L,j^F(x^L)\)$, where $j^F(x^L) \in J$ such that $\alpha\(x^L, j^F(x^L)\)=\alpha^L$. Notice that at different strategy pairs $\(x^L, j^F(x^L)\)$ the payoff of the leader is constant and equal to $\alpha^L$ but the follower may obtain different payoffs $\beta^F$.
 
\subsection{Monotonicity of the bounds: matrix game value, commitment value and equilibria payoffs}
We start with the observation that the lower and upper bounds in relations \eqref{bounddeg} and \eqref{boundndeg}, i.e. $v_A$ and $\alpha^L$ (resp. $\alpha^H$) exhibit certain monotonicity relations to the sizes of the pure strategy spaces (i.e. the number of pure strategies) of the players. The proof is immediate from the definitions and thus ommitted.
\begin{lemma}\label{mon} Let $|I|=m$ and $|J|=n$ be the numbers of pure strategies of player \I and \II respectively in a bimatrix game $\Gamma$. Then 
\begin{enumerate}[leftmargin=0cm,itemindent=.5cm, labelwidth=\itemindent,labelsep=0cm, noitemsep, align=left]
\item The value of the matrix game of player \I in $\Gamma$, $v_A$, is a non-decreasing function of $m$ and a non-increasing function of $n$.  
\item The lowest and highest leader payoffs $\alpha^L$ and $\alpha^H$ of player \I in $\gi$ are non-decreasing functions of $m$, but not necessarily non-increasing in $n$.
\item The Nash equilibria payoffs of player \I do not have a certain monotonicity relation to the number of the player's own strategies or to the number of the other player's strategies.
\end{enumerate}
\end{lemma}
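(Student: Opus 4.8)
\section*{Proof proposal}

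The plan is to treat the three assertions separately, obtaining the \emph{monotone} directions directly from the max--min structure of the defining formulas and the \emph{non-monotone} directions by exhibiting small explicit games. Throughout I compare $\Gamma$ with a game $\Gamma'$ obtained by adjoining one pure strategy to one of the players, and I use the embedding that sends a strategy $x$ into the larger simplex by appending a zero coordinate, $x \mapsto x' := (x,0)$; the point of this embedding is that appending a zero row contributes nothing to $x'^{T}A'$ and $x'^{T}B'$, so $x'^{T}A' = x^{T}A$ and $x'^{T}B' = x^{T}B$. For the first statement I would read off both claims from $v_A=\max_{x\in X}\min_{j\in J}\alpha(x,j)$: enlarging player~\I's strategy set embeds $X$ into the larger simplex while leaving $\min_{j\in J}\alpha(x,j)$ unchanged along the image, so the maximum over the larger set cannot decrease, giving non-decreasingness in $m$; enlarging player~\II's strategy set enlarges the index set $J$ over which the inner minimum is taken, and a minimum over a superset is no larger, so for every fixed $x$ the value drops weakly and hence so does the maximum, giving non-increasingness in $n$. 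This is just the standard monotonicity of the max--min operator.

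For the second statement, the non-decreasingness in $m$ I would again get from the embedding. For $\alpha^{H}=\max_{j\in J}\max_{x\in X(j)}\alpha(x,j)$ in \eqref{alphaH} this is immediate: $J$ is unchanged when player~\I gains a strategy, and if $(j^{*},x^{*})$ attains $\alpha^{H}(\Gamma)$ then $\brii(x'^{*})=\brii(x^{*})$ forces $x'^{*}\in X'(j^{*})$ while $\alpha(x'^{*},j^{*})=\alpha(x^{*},j^{*})$, so the same value is feasible in $\Gamma'$. For $\alpha^{L}$ in \eqref{alphaL} the same embedding works once one notes that appending a row only makes column domination and column payoff-equivalence harder, so the undominated set $D$ (which by \citet{St04} is exactly the set of non-weakly-dominated columns) can only grow and each class $\mathcal E(j)$ can only shrink; hence for the maximizing pair $(j^{*},x^{*})$ one has $j^{*}\in D\subseteq D'$, $x'^{*}\in X'(j^{*})$, and $\alpha^{L}(\Gamma')\ge \min_{k\in\mathcal E'(j^{*})}\alpha(x'^{*},k)\ge \min_{k\in\mathcal E(j^{*})}\alpha(x^{*},k)=\alpha^{L}(\Gamma)$. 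The transparent way to say the same thing is that $\alpha^{L}$ is a payoff the leader \emph{can guarantee}, and extra rows cannot lower a guarantee he already had.

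The remaining, negative, claims are where the work is, and I would settle them by example. The conceptual contrast driving the failure of non-increasingness of $\alpha^{L},\alpha^{H}$ in $n$ is that the inner ``min'' in $v_A$ models an adversarial opponent, so extra columns hurt the leader, whereas in the leadership game the follower maximizes $B$, so a newly adjoined column may be his best reply on some region of $X$ and more favorable to the leader there, \emph{raising} the value he can induce. A minimal witness starts from the one-column game $A=(1,1)^{T}$, where $\alpha^{L}=\alpha^{H}=1$, and adjoins a follower-preferred column paying the leader $3$, forcing $\alpha^{L}=\alpha^{H}=3>1$. For the third statement I would exhibit, for each of $m$ and $n$, one instance of a strict increase and one of a strict decrease. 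The increases are trivial (adjoin a strictly dominant, high-paying own strategy, or a column the opponent is induced to play that pays the leader more). The decreases are the substantive ones: adjoining a column that is dominant for the opponent but bad for the leader strictly lowers player~\I's equilibrium payoff in $n$, while in $m$ one uses the loss-of-commitment phenomenon (a trust/Samaritan-type game) in which giving player~\I an extra ``tempting'' row destroys a favorable equilibrium and replaces it by a strictly worse unique one.

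The easy content is the max--min bookkeeping and the embedding, which make the monotone directions immediate, as the lemma asserts. The main obstacle is in the negative statements: one must articulate precisely why $\alpha^{L}$ fails to inherit the $n$-monotonicity of $v_A$ (the adversarial-versus-rational-follower distinction above), and, for the third statement, choose clean small games and fix a convention for ``the'' Nash payoff---tracking, say, the lowest payoff $l$ and the highest payoff $h$ separately---so that each claimed strict increase and strict decrease is unambiguous even though $NE\Pi(\Gamma)$ need not be a singleton.
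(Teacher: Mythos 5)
The paper offers no proof of this lemma at all---it declares the result ``immediate from the definitions'' and omits it, pointing only to the Traveler's Dilemma (restricting player I to $I=\{100\}$ and then re-adding $99, 98,\dots$) as the witness for the non-monotonicity of Nash payoffs in part 3. Your proposal is a correct filling-in of exactly the argument the authors intend: max--min monotonicity for $v_A$, the append-a-zero-coordinate embedding (which preserves $x^TB$ and hence $\brii$) for the monotonicity of $\alpha^L,\alpha^H$ in $m$, and counterexamples for the negative claims; your two-column example for the failure of $n$-monotonicity of $\alpha^H$ is fine. One caveat on your formula-based route to the $\alpha^L$ claim: it is not true in general that $D$ can only grow when a row is adjoined. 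By the \citet{St04} characterization $j\in D$ iff $j$ is not weakly dominated, and a column $j$ that is payoff-equivalent to some $k$ in $\Gamma$ (hence undominated) can become \emph{weakly dominated} by $k$ in $\Gamma'$ if the new row favors $k$, so $j^*$ may drop out of $D'$ and the chain of inequalities you wrote does not literally go through. The ``transparent'' version you give---$\alpha^L$ is a payoff the leader can guarantee via an inducibility argument, the embedded strategy presents the follower with the identical decision problem, and a guarantee cannot be lost by enlarging the leader's own strategy set---is the argument that actually closes this case, so lead with that rather than with monotonicity of $D$ and $\mathcal E(j)$. Finally, for part 3 you describe the required counterexamples (dominant-but-bad column for the $n$-direction, a commitment-destroying extra row for the $m$-direction) but exhibit none of them; in a submitted proof you would need to write down at least one explicit game for each strict decrease, and you should fix, as you note, whether you track $l$ or $h$.
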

Although obvious, Lemma \ref{mon} highlights an undesired property of the Nash equilibrium. Given the strategies of his opponent, having more choices should offer a strategic advantage to a player. While this is indeed the case in terms of the leader's commitment value $\alpha^L$, his highest subgame perfect equilibrium payoff $\alpha^H$, and his matrix game value $v_A$, having more options may well be harmful in terms of his Nash equilibria payoffs in $\Gamma$. \par 

It is easy to construct such an example by referring to the TrD (see section \ref{trd}). There, consider first the game having the same payoff functions $\alpha, \beta$, but strategy spaces $I=\{100\}, J=\{2,3,\ldots,100\}$ and start increasing the strategy space of player I by adding 99, then 98, etc.\par 

The introduction/removal of strongly dominated strategies affects these bounds in a non-trivial way. For example, in evaluating the matrix game value $v_A$, the $\min$ is taken against all strategies in $J$, including strongly dominated strategies of player II. In view of Lemma \ref{mon}, this means that $v_A$ may decrease by the addition of a strongly dominated strategy to player II's strategies.\par

On the other hand, Nash equilibria payoffs and leader payoffs remain unaffected if strongly dominated strategies of the other player are introduced/removed. A more interesting case occurs when I has strategies that are strongly dominated in $\Gamma$ but not in $\gi$. Such strategies do not affect $v_A$ or Nash equilibrium payoffs $\alpha^N$, but may improve his leader payoff bounds $[\alpha^L, \alpha^H] $. This subject is addressed in detail later, see example \ref{31019} and section \ref{trd} for instances of this case.

\subsection{Nash equilibria and pure commitment optimal strategies}\label{cpure}
We show that for non-degenerate bimatrix games any \textit{pure} commitment optimal strategy of I, together with II's best response to it, constitute a Nash equilibrium of $\Gamma$. In the proof we make use of an observation by \citet{St10}, namely that in a non-degenerate game any best reply region $X\(j\)$ is either empty or full-dimensional, i.e. either $X\(j\)=\emptyset$ or $j \in D$.\par

\begin{proposition}\label{pure}
If the bimatrix game $\Gamma$ is non-degenerate for player $\mathrm I$, and if \I has a pure commitment optimal strategy $i^L$, then the strategy profile $\(i^L, j^F\)$ is a pure strategy Nash equilibrium of $\Gamma$.    
\end{proposition}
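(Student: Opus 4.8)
The plan is to show that $(i^L, j^F)$ is a Nash equilibrium by verifying the two unilateral deviation conditions separately, using the fact that non-degeneracy for player I forces every nonempty best reply region $X(j)$ to be full-dimensional. Recall that $i^L$ is a pure commitment optimal strategy, meaning $\alpha(i^L, j^F) = \alpha^L = \alpha^H$, and $j^F \in \brii(i^L)$ is the follower's best response that the leader induces. The follower's condition is immediate: since $j^F$ is by definition a best reply to $i^L$ in $\Gamma$, player II cannot profit by deviating from $j^F$ when I plays $i^L$. So the entire content of the proposition lies in verifying player I's condition, namely that $\alpha(i^L, j^F) \ge \alpha(i, j^F)$ for every pure strategy $i \in I$.

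For player I's condition I would argue by contradiction. Suppose some pure strategy $i' \in I$ satisfies $\alpha(i', j^F) > \alpha(i^L, j^F) = \alpha^H$. The idea is that this would let the leader exceed $\alpha^H$ in $\gi$, contradicting the definition of $\alpha^H$ as the highest attainable leader payoff in \eqref{alphaH}. The key geometric observation is that $i^L \in X(j^F)$, so $i^L$ lies in the best reply region of $j^F$. Since the game is non-degenerate for player I and $X(j^F) \neq \emptyset$ (it contains $i^L$), the region $X(j^F)$ is full-dimensional, i.e.\ $j^F \in D$. The plan is to exploit full-dimensionality: because $X(j^F)$ has nonempty interior and $i^L$ is a vertex of the simplex lying in it, one can move from $i^L$ a small distance toward $i'$ and remain inside $X(j^F)$, so that $j^F$ stays a best reply of the follower along this segment. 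Along that segment the leader's payoff against $j^F$ is the linear function $\alpha(\cdot, j^F)$, which strictly increases in the direction of $i'$ whenever $\alpha(i', j^F) > \alpha(i^L, j^F)$. Hence there is a strategy $x \in X(j^F)$ with $\alpha(x, j^F) > \alpha^H$, contradicting \eqref{alphaH}.

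The main obstacle is making the ``move toward $i'$ while staying in $X(j^F)$'' step rigorous, since $i^L$ is a corner of the simplex and one must ensure the perturbed strategy both remains a valid mixed strategy and remains in the best reply region. The cleaner route, which I would prefer, is to invoke full-dimensionality directly: since $j^F \in D$, there is an interior point $\hat{x} \in X^{\mathrm o}(j^F)$ with all coordinates strictly positive, and $j^F \in \brii(\hat x)$. Consider the segment from $i^L$ to $\hat{x}$; by convexity of $X(j^F)$ the whole segment lies in $X(j^F)$, so $j^F$ is a best reply throughout. If the contradiction hypothesis $\alpha(i', j^F) > \alpha^H$ held, I would use $\hat x$ having full support to perturb it slightly in the $i'$ direction, staying in the open region $X^{\mathrm o}(j^F)$ while raising the value of $\alpha(\cdot, j^F)$ beyond $\alpha^H$, again contradicting \eqref{alphaH}. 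This confirms $\alpha(i^L, j^F) \ge \alpha(i, j^F)$ for all pure $i$, and hence for all mixed strategies by linearity, so neither player can profitably deviate and $(i^L, j^F)$ is a Nash equilibrium of $\Gamma$.
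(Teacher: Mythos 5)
Your first line of attack is exactly the paper's proof, and the step you flag as ``the main obstacle'' is precisely where the non-degeneracy hypothesis does its work, so you should not have set it aside. Since $i^L$ is pure (support of size one) and $\Gamma$ is non-degenerate for player I, $j^F$ is the \emph{unique} pure best reply to $i^L$, i.e. $i^L \notin X(j)$ for every $j \neq j^F$. The regions $X(j)$, $j\in J$, are finitely many closed sets covering $X$, so a whole neighborhood of $i^L$ in $X$ avoids all $X(j)$ with $j \neq j^F$ and is therefore contained in $X(j^F)$; in particular $(1-\epsilon)i^L + \epsilon\, i \in X(j^F)$ for every $i \in I$ and $\epsilon>0$ small (and this is trivially a valid mixed strategy, being a convex combination of two pure strategies). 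Combined with $\alpha(i^L,j^F) = \alpha^L = \alpha^H \ge \max_{x \in X(j^F)}\alpha(x,j^F)$, which follows from \eqref{alphaH}, and the linearity of $\alpha(\cdot,j^F)$, this gives $\alpha(i^L,j^F) \ge \alpha(i,j^F)$ for all $i\in I$. That is the paper's argument almost verbatim (stated there directly rather than by contradiction).

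The ``cleaner route'' you prefer instead has a genuine gap. You take an arbitrary full-support point $\hat x \in X^{\mathrm{o}}(j^F)$ and claim that perturbing it slightly toward $i'$ pushes $\alpha(\cdot,j^F)$ beyond $\alpha^H$. But $\hat x$ carries no information about where $\alpha(\cdot,j^F)$ is large: $\alpha(\hat x, j^F)$ may be far below $\alpha^H$, and a slight perturbation raises the payoff only by an amount of the order of the perturbation. To exceed $\alpha^H$ you would have to travel most of the way to $i'$, and nothing guarantees the segment stays in $X(j^F)$ over that distance; so no contradiction is reached. The perturbation must be anchored at $i^L$ itself, where the payoff already equals $\alpha^H$, and there the containment in $X(j^F)$ is supplied by the uniqueness of the pure best reply to $i^L$ (a consequence of non-degeneracy), not by convexity or by the mere existence of an interior point of $X(j^F)$.
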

\begin{proof}
Let $i^L \in X^L$ and let $j^F \in \brii\(i^L\)$. Since the game is non-degenerate for I, $j^F$ is unique, i.e  $i^L \notin X\(j\)$ for any other $j \in J, j \neq j^F$. Since $i^L \in X\(j^F\)$, $X\(j^F\)$ must be full-dimensional, so that for any $i\in I$ and for $\epsilon>0$ sufficiently small, the mixed strategy $x^{L}_{\epsilon,i}:=\(1-\epsilon\)\(i^L\)+\epsilon \(i\)$ lies only in $X\(j^F\)$. Since $\alpha\(i^L, j^F\)=\max_{x\in X\(j^F\)} \alpha\(x, j^F\)$ and $\alpha\(\cdot, j^F\)$ is linear, we conclude that $\alpha\(i^L,j^F\) \ge \alpha\(x^L_{\epsilon,i}, j^F\)=\(1-\epsilon\)\alpha\(i^L,j^F\)+\epsilon\alpha\(i,j^F\) \forall i \in I$. But then, $\alpha\(i^L,j^F\) \ge \alpha\(i,j^F\) \forall i \in I$, i.e. $i^L \in \bri\(j^F\)$. Hence, $\(i^L, j^F\)$ are both best responses one to the other in $\Gamma$.
\end{proof}
The construction of the mixed strategies $x^L_{\epsilon,i}$, $i\in I$, has been employed by \cite{St04} and \cite{St10} in various proofs. If the game is degenerate, then the pure strategy $i^L$ may also belong to another $X\(j\)$ for some $j \in J, j\neq j^F$, and the statement of Proposition \ref{pure} is not always true. This is highlighted by the following example. 
\begin{example}\label{-321-303}
The $3\times 2$ bimatrix game $\Gamma$ with pure strategy spaces $I=\{1,2,3\}, J=\{4,5\}$ and payoff matrices $\(A, B\)$ 
\[A=\begin{pmatrix*}[r]-3 & 2\\ 1 & -3 \\ 0 & 3\end{pmatrix*}\qquad B=\begin{pmatrix*}[r]0 & 0\\ 1 & 2 \\ 3 & -1\end{pmatrix*}\]
has a unique Nash equilibrium $\(x^N,y^N\)=\(\(0, \frac45, \frac15\),\(\frac67,\frac17\)\)$ with payoffs $\(\alpha^N,\beta^N\)=\(\frac37,\frac75\)$. The pure strategy $s_1$ of I has two pure best replies, i.e. $\brii\(s_1\)=\conv\{t_4,t_5\}$, hence the game is degenerate (for player I). Nevertheless, both best reply regions $X\(4\)$ and $X\(5\)$ have full dimension and therefore player I's equilibrium payoff in the leadership game $\gi$ is unique (c.f. section \ref{exist}) and may be determined by relation \eqref{alphaH}. The edges of the best reply regions and the corresponding payoffs for player I are
\[\begin{array}{ll}
C\(4\)=\left\{s_1,\(0, 0.8, 0.2\),s_3\right\}, \text{with }\, \alpha\(x,j^F=4\)_{x\in C\(4\)}=\left\{-3, 0.8, 0\right\}\\
C\(5\)=\left\{s_1,\(0, 0.8, 0.2\),s_2\right\}, \text{with }\, \alpha\(x,j^F=5\)_{x\in C\(5\)}=\left\{2, -1.8, -3\right\} \end{array}\]
giving that $\alpha^L=2$, with $\(x^L,j^F\)=\(s_1,t_5\)$ which, however, is not a pure strategy equilibrium of $\(A,B\)$. 
\end{example} 
In view of Proposition \ref{pure}, a necessary condition for the commitment value to be strictly better than all Nash equilibria payoffs in a non-degenerate bimatrix game, i.e. for  $\alpha^L>h$, is that all strategies in $X^L$ are mixed. In other words, actual mixed strategies have to be used if the leader is to improve his payoff over all Nash equilibria. This is in line with the ``concealment'' interpretation of mixed strategies as expressed in \citet{Mo53} and is to be expected since the leader-follower game is precisely the sort of game that von Neumann and O. Morgenstern were considering when arguing about the need of concealment\footnote{The need for actually mixing pure strategies originates from the optimization point of view, while the Bayesian interpretation of mixing as uncertainty on a player's type by the other players originates from the equilibrium point of view. More on the interpretation of mixed strategies can be found in \cite{Re04}.}.

\subsection{Completely mixed Nash equilibrium strategies vs commitment optimal strategies}

The next property states that a player's strategy at a Nash equilibrium is strictly dominated by his commitment optimal strategy provided (a) the bimatrix game is non-degenerate, (b) the Nash equilibrium under consideration is completely mixed, and (c) the equilibrium strategy under consideration is not matrix game optimal (i.e. maximin).\par
 
A Nash equilibrium $(x^N,y^N)$ is completely mixed if all pure strategies of both players are played with positive probability, which implies that any $i\in I$ is a best response against $y^N$ and similarly any $j\in J$ is a best response against $x^N$. Moreover, the non-degeneracy property implies that there is no other completely mixed equilibrium and that the supports of $x^N, y^N$ have equal size, i.e. $|I|=|J|$.
\begin{proposition} \label{mixed} Let $\Gamma$ be a non-degenerate bimatrix game with a completely mixed Nash equilibrium $(x^N, y^N)$, such that $x^N$ is not a matrix game optimal (maximin) strategy. Then $\alpha^L>\alpha^N$.
\end{proposition}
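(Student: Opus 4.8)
The plan is to show first that $\alpha^L\ge\alpha^N$ holds automatically for a completely mixed equilibrium, and then to argue that equality would force $x^N$ to be maximin, which is excluded by hypothesis. First I would exploit completely mixedness on player II's side: since $y^N$ is completely mixed, player II is indifferent among all of his pure strategies against $x^N$, i.e. $\beta\(x^N,j\)=\beta^N$ for every $j\in J$. Hence every $j$ is a best reply to $x^N$, so $\brii\(x^N\)=Y$. Committing to $x^N$ in $\gi$ therefore lets the leader induce whichever pure strategy is best for him, and using \eqref{alphaH} together with $\alpha^L=\alpha^H$ (valid since $\Gamma$ is non-degenerate) I obtain
\[\alpha^L=\alpha^H\ge \max_{j\in\brii\(x^N\)}\alpha\(x^N,j\)=\max_{j\in J}\alpha\(x^N,j\).\]
On the other hand $\alpha^N=\alpha\(x^N,y^N\)=\sum_{j\in J}y^N_j\,\alpha\(x^N,j\)$ is a convex combination of the numbers $\alpha\(x^N,j\)$ with all weights $y^N_j>0$, so $\max_{j\in J}\alpha\(x^N,j\)\ge\alpha^N$, and consequently $\alpha^L\ge\alpha^N$.

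The remaining, and crucial, step is to upgrade this to a strict inequality, which I would do by contradiction. Assume $\alpha^L=\alpha^N$. Then the chain above is squeezed to equalities, so in particular $\max_{j\in J}\alpha\(x^N,j\)=\alpha^N$. Since $\alpha^N$ is a strict convex combination of the $\alpha\(x^N,j\)$ that equals their maximum, every term must coincide: $\alpha\(x^N,j\)=\alpha^N$ for all $j\in J$. This forces $\min_{j\in J}\alpha\(x^N,j\)=\alpha^N$, whence $v_A=\max_{x}\min_{j}\alpha\(x,j\)\ge\alpha^N$. But $\alpha^N$ is a Nash payoff for the leader, so $l\le\alpha^N$, while \eqref{boundndeg} gives $v_A\le l$; hence $v_A\le\alpha^N$. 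Combining the two bounds yields $v_A=\alpha^N$ and $\min_{j}\alpha\(x^N,j\)=v_A$, i.e. $x^N$ attains the maximin value and is itself a maximin strategy, contradicting the hypothesis that $x^N$ is not matrix game optimal. Therefore $\alpha^L>\alpha^N$.

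I expect the main obstacle to be conceptual rather than computational: recognizing that the right object to commit to is $x^N$ itself, and that it is completely mixedness of $y^N$ (rather than of $x^N$) that makes \emph{every} follower reply a best reply and thus delivers the lower bound on $\alpha^L$. The strictness then hinges on the exact link between the equalizing condition $\alpha\(x^N,j\)\equiv\alpha^N$ and the maximin characterization, obtained through the sandwich $v_A\le\alpha^N$ from \eqref{boundndeg} and $v_A\ge\alpha^N$ from the equalizer. Non-degeneracy enters only to guarantee $\alpha^L=\alpha^H$, so that the lower bound derived from \eqref{alphaH} is in fact a bound on the commitment value $\alpha^L$.
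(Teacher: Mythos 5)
Your proof is correct and follows essentially the same route as the paper's: commit to $x^N$ itself, observe that every pure reply of the follower is a best reply to $x^N$ (hence inducible by non-degeneracy), and extract some $j$ with $\alpha\(x^N,j\)>\alpha^N$ from the fact that $x^N$ does not equalize the columns of $A$. The only difference is that where the paper imports the implication ``column equalizer at a completely mixed equilibrium implies maximin'' from Lemma 1 of \cite{Pr11}, you prove it inline via the sandwich $v_A\le\alpha^N\le v_A$, which makes the argument self-contained.
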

\begin{proof} By Lemma 1 of \cite{Pr11}, $x^N$ cannot be a column equalizer in the payoff matrix $A$ of player I, since otherwise $x^N$ would be a maximin strategy. Hence, there exist $j_1,j_2$ in $J$ such that $\alpha(x^N,j_1)>\alpha^N>\alpha(x^N,j_2)$. Since, $j_1$ is inducible by the non-degeneracy property, the point $\alpha(x^N,j_1)$ constitutes a payoff that player I can guarantee for himself in $\gi$ and hence $\alpha^L\ge \alpha(x^N,j_1)$. 
\end{proof}
Example \ref{4132} describes a game where Proposition \ref{mixed} holds for both players. If $x^N$ is maximin then an improvement may not be possible, as in the $2\times 2$ matrix game of matching pennies. In case Proposition \ref{mixed} applies, the strategy profile $(x^N,j_1)$  Pareto-dominates the completely mixed Nash equilibrium, since by construction, $\alpha(x^N,j_1)>a^N$ and $\beta(x^N,j_1)=b^N$. 

\section{Commitment value and Nash equilibria payoffs in generalizations of matrix games}\label{classes}
Many classes of games that extend two-person zero-sum games have been studied in the literature. Among others, one is referred to \cite{Au61, Ka92}, \cite{Mo78} and \cite{Be02}, who generalize zero-sum games in different ways (see the review by \cite{Vi06}). A natural question is whether all three solution concepts (i.e. matrix game value, Nash equilibria payoffs, and commitment value) coincide on such generalizations, as is the case for zero-sum games, i.e whether
\begin{equation}\label{three} v_A=\alpha^L=\alpha^H .\end{equation}  
Of interest is also the case where all Nash equilibria payoffs of the leader are equal to his matrix game value, but his commitment value is strictly higher, i.e. 
\begin{equation}\label{two} v_A=h < \alpha^L .\end{equation}

In section \ref{wucg}, it is shown that \eqref{three} is true for the class of weakly unilaterally competitive (wuc) games, firstly defined by \cite{Ka92}. The wuc games strictly include the classes of zero-sum and strictly competitive games. Recently, interest in wuc games was reignited in view of the sufficient conditions for the existence of pure strategy equilibria in such games that were given by \cite{Ii16}. \par
Equation \eqref{three} is also valid in the class of a-cooperative games. However, for other generalizations of zero-sum games, namely pre-tight, best response equivalent to zero-sum, and almost strictly competitive games, \eqref{three} is not valid. Equation $v_A=h$ holds in some of these classes and then \eqref{two} may be true in certain cases (for details see section \ref{other}). \par

Classes generalizing zero sum games which satisfy \eqref{three}, such as wuc or a-cooperative games, retain the flavor of pure antagonism that characterizes zero-sum games. All solution concepts on them coincide and no controversies on what constitutes an optimal behavior for the players may arise. On the other hand, we expect Nash equilibrium to be a questionable solution concept for games in classes generalizing zero sum games which satisfy \eqref{two}. Such ``bad behavior'' cases may be found in the class of pre-tight games, a typical example being the TrD. For a detailed exposition see \ref{other} and \ref{trd}.

\subsection{A sufficient condition}
Here we examine conditions under which equation \eqref{three} is valid. Obviously this obtains, if 
\begin{equation}\label{suff}\max_{j\in \brii\(x\)}\alpha\(x,j\)=\min_{j\in J}\alpha\(x,j\), \quad \forall x\in X\end{equation} 
Also, if there exists some $x_0\in X$ such that 
\begin{equation}\label{relax}\max_{j\in \brii\(x_0\)}\alpha\(x_0,j\)=\min_{j\in J}\alpha\(x_0,j\)\ge \max_{j \in \brii\(x\)}\alpha\(x,j\), \quad \forall x\in X,\end{equation}
then, the left hand side of \eqref{relax} is $\alpha^H$ and since $v_A=\max_{x \in X}\min_{j \in J}\alpha\(x,j\)\ge\min_{j\in J}\alpha\(x_0,j\)=\alpha^H$, we conclude that \eqref{relax} is sufficient to get $v_A=\alpha^L=\alpha^H$, i.e \eqref{three}.

\begin{example} 
The $2\times 2$ bimatrix game with payoff matrices
\[A=\begin{pmatrix*}[r] 1 & 0 \\ -2 & -10 \end{pmatrix*}\qquad B=\begin{pmatrix*}[r] 0 & 1\\ 1 & 0 \end{pmatrix*}\]
satisfies condition \eqref{relax}, but not \eqref{suff} in $\gi$. 
\end{example}

\subsection{Leader payoffs in wuc games}\label{wucg}
We now turn our attention to two person weakly unilaterally competitive (wuc) games. 
\begin{definition}\label{wucdef}
A bimatrix game $\Gamma$ is weakly unilaterally competitive, if for all $x_1, x_2 \in X$ and all $ y \in Y$
\begin{align*}
\alpha\(x_1, y\) > \alpha\(x_2, y\) & \implies \beta\(x_1, y\) \le \beta\(x_2, y\)\\ 
\alpha\(x_1, y\) = \alpha\(x_2, y\) & \implies \beta\(x_1, y\) =  \beta\(x_2, y\)\end{align*}
and similarly if for all $y_1, y_2 \in Y$ and all $x \in X$ 
\begin{align*}
\beta\(x, y_1\) > \beta\(x, y_2\) &\implies \alpha\(x, y_1\) \le  \alpha\(x, y_2\)\\
\beta\(x, y_1\) = \beta\(x, y_2\) &\implies \alpha\(x, y_1\)  =  \alpha\(x, y_2\) \end{align*}
\end{definition}
\cite{Be02} observes that the classes of two-person zero-sum, strictly competitive (sc), unilaterally competitive (uc) and weakly unilaterally competitive (wuc) bimatrix games satisfy the inclusion relation $\(\text{zero-sum}\) \subsetneq (\text{sc}) \subsetneq (\text{uc}) \subsetneq (\text{wuc})$.  As an immediate consequence of the definition, wuc games satisfy the sufficient condition \eqref{suff}, and thus the leader's equilibrium payoff is unique and both the commitment value and all Nash equilibria payoffs are equal to the matrix game value. Formally,  
\begin{proposition} \label{wuc}In a wuc game $\Gamma$, the leader's payoff at any subgame perfect equilibrium of the commitment game is equal to his commitment value which is equal to his matrix game value, i.e. $v_A=\alpha^L=\alpha^H$ and $v_B=\beta^L=\beta^H$. Moreover, all Nash equilibria payoffs of the wuc game are equal to the matrix game values, i.e. $(\alpha^N, \beta^N)=\(v_A, v_B\)$ for all $(x^N, y^N)$.
\end{proposition}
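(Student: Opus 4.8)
The entire statement reduces to the single verification that a wuc game satisfies the sufficient condition \eqref{suff}; everything else can then be assembled from results already in the excerpt. Concretely, the plan is: (i) show that $\max_{j\in\brii\(x\)}\alpha\(x,j\)=\min_{j\in J}\alpha\(x,j\)$ for every $x\in X$; (ii) invoke the discussion following \eqref{suff} to conclude \eqref{three}, i.e. $v_A=\alpha^L=\alpha^H$, and run the mirror-image argument for the game $\gii$ to obtain $v_B=\beta^L=\beta^H$; and (iii) use the bound chain \eqref{bounddeg} to pin down the Nash payoffs. Since the subgame perfect equilibrium payoffs of the leader fill the interval $[\alpha^L,\alpha^H]$, the collapse $\alpha^L=\alpha^H$ in step (ii) is exactly what makes every such payoff equal to the commitment value.

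The crux is step (i), and it is where the antagonistic structure of wuc games enters. Fix $x\in X$ and let $j^*$ be a \emph{pure} best reply of the follower, so that $\beta\(x,j^*\)\ge\beta\(x,j\)$ for all $j\in J$; restricting to pure $j$ loses nothing, because $\alpha\(x,\cdot\)$ is linear and $\brii\(x\)=\conv\{j:j\in\brii\(x\)\}$, so both extrema appearing in \eqref{suff} are attained at vertices. I would then apply the second pair of implications in Definition \ref{wucdef}, with $x$ fixed and $j^*,j$ playing the roles of $y_1,y_2$: if $\beta\(x,j^*\)>\beta\(x,j\)$ then $\alpha\(x,j^*\)\le\alpha\(x,j\)$, and if $\beta\(x,j^*\)=\beta\(x,j\)$ then $\alpha\(x,j^*\)=\alpha\(x,j\)$; in either case $\alpha\(x,j^*\)\le\alpha\(x,j\)$. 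Hence $\alpha\(x,j^*\)=\min_{j\in J}\alpha\(x,j\)$. Since all pure best replies share the same (maximal) follower payoff, the equality implication forces them all to share the leader payoff $\alpha\(x,j^*\)$, so $\max_{j\in\brii\(x\)}\alpha\(x,j\)=\alpha\(x,j^*\)$ as well, and \eqref{suff} follows.

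Step (ii) is then immediate: maximizing both sides of \eqref{suff} over $x$ and using \eqref{alphaH} gives $\alpha^H=\max_{x\in X}\min_{j\in J}\alpha\(x,j\)=v_A$, and together with the always-valid inequalities $v_A\le\alpha^L\le\alpha^H$ this yields \eqref{three}; the first pair of implications in Definition \ref{wucdef} produces the symmetric identity $v_B=\beta^L=\beta^H$ for the follower-as-leader game. For step (iii), the chain $v_A\le l\le h\le\alpha^H$ from \eqref{bounddeg} now has both endpoints equal to $v_A$, forcing every Nash payoff of player I to equal $v_A$; the symmetric bound for player II forces every Nash payoff of II to equal $v_B$, so $\(\alpha^N,\beta^N\)=\(v_A,v_B\)$ at every equilibrium. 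I do not anticipate a genuine obstacle: the only points demanding care are the reduction of the two extrema in \eqref{suff} to pure best replies and the strict-versus-equality case split in step (i), which is precisely where weak unilateral competitiveness does its work.
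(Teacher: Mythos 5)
Your proposal is correct and follows essentially the same route as the paper: it verifies that weak unilateral competitiveness forces condition \eqref{suff} by splitting the comparison of a pure best reply $j^*$ with an arbitrary $j\in J$ into the strict and equality cases of Definition \ref{wucdef}, then deduces $v_A=\alpha^L=\alpha^H$ from the discussion following \eqref{suff} and pins down the Nash payoffs via the bound chain \eqref{bounddeg}. The only difference is that you spell out the reduction to pure best replies and the collapse of the interval $[\alpha^L,\alpha^H]$ more explicitly than the paper does, which is harmless.
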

\begin{proof} By Definition \ref{wucdef}, we have that for any $x \in X$ and any $j_1,j_2,k \in J$ with $j_1, j_2 \in \brii\(x\)$ and $k\notin \brii\(x\)$
\begin{align*}
\beta\(x, j_1\) = \beta\(x, j_2\) &\implies \alpha\(x, j_1\) = \alpha\(x, j_2\)\\
\beta\(x, j_1\) > \beta\(x, k\phantom{'}\)&\implies \alpha\(x, j_1\) \le \alpha\(x, k\phantom{'}\)\end{align*}
which together imply that $\max_{j\in \brii\(x\)}\alpha\(x,j\)=\min_{j\in J}\alpha\(x,j\)$, i.e \eqref{suff} is satisfied. Similarly, using the first part of Definition \ref{wucdef}, the result follows for player II. The second part of the proposition, already known in the literature, follows trivially from the inclusion of Nash equilibria payoffs between the matrix game values and the highest subgame perfect equilibrium payoff $\alpha^H$ (resp. $\beta^H$) of the commitment game.\end{proof}

\Citet{Hu96} derive a result similar to our Proposition \ref{wuc} for the class of strictly competitive games, which, as we have seen, 
is a subclass of wuc games. \par

\emph{\underline{Note}}: Proposition \ref{wuc} can be generalized for N-player wuc games, $N>2$. Such games are defined similarly to Definition \ref{wucdef} (see \cite{Ka92}). Also, for N-player games, the corresponding leadership game $\Gamma^1$, is defined similarly: in the first stage, the leader (say, player 1) commits to a mixed strategy and in the second stage, the remaining $N-1$ players \emph{simultaneously} choose their strategies knowing the mixed strategy of the leader. So, take $\Gamma$ to be a wuc N-player game and let $\Gamma(x)$ be the sub-game where player 1 has fixed his mixed strategy $x$. Then, $\Gamma(x)$ is also a wuc game. Let  $Y=\prod_{i=2}^N Y_i$ be the set of mixed \emph{uncorrelated} strategy profiles of the followers and let $Y(x)$ be the set of strategies of the followers that are at equilibrium in $\Gamma(x)$. By a result of \cite{Wo99} for N player wuc games, if $\alpha(x,y)$ denotes player 1's payoff in $\Gamma(x)$, then, $\alpha\(x,y_1\)=\alpha\(x,y_2\)=\min_{y\in Y}$ for any $y^1,y^2 \in Y\(x\)$, i.e. for each $x$, player 1's payoff is constant over all equilibrium strategies of the followers in $\Gamma(x)$ and it is the worse possible outcome for him over all mixed strategies of the followers. But then, $\max_{y\in Y(x)}\alpha\(x,y\)=\min_{y\in Y}\alpha\(x,y\)$ which implies that $\max_{x \in X}\max_{y\in Y(x)}\alpha\(x,y\) = \max_{x \in X}\min_{y\in Y}\alpha\(x,y\)$.

\subsection{Games of common interest}\label{alt}
Opposite to bimatrix games resembling zero-sum games stand games where there is a strong motivation for the cooperation of the two players. In such games we may get the equality of $\alpha^L$ and $\alpha^H$ by a condition opposite to \eqref{suff}, namely
\begin{equation}\label{conv}\max_{j\in \brii\(x\)}\alpha\(x,j\)=\max_{j\in J}\alpha\(x,j\), \quad \forall x\in X\end{equation}
This condition guarantees that $\alpha^H=\max_{i,j}\alpha\(i,j\)$, since $\max_{x \in X}\max_{j\in J}\alpha\(x,j\)=\max_{i,j}\alpha\(i,j\)$ and therefore \eqref{conv} is sufficient to get $\alpha^L=\max_{i,j}\alpha\(i,j\)$ for non-degenerate games. For degenerate games though, this is not the case as the next example shows. 
\begin{example} 
The $2\times 2$ bimatrix game with payoff matrices
\[A=\begin{pmatrix*}[r] -1 & 2 \\ -2 & 0 \end{pmatrix*}\qquad B=\begin{pmatrix*}[r] 1 & 1\\ 1 & 1 \end{pmatrix*}\]
satisfies condition \eqref{conv} for player I, but $\alpha^L=-1<\alpha^H=2=\max_{ij}\alpha\(i,j\)$. 
\end{example}
An alternative sufficient condition for $\alpha^L$ to be equal to $\alpha^H$ is the existence of an $x_0\in X$ such that $\alpha^H$ is acquired at $x_0$ and $\alpha\(x_0,j\)$ is constant on $\brii\(x_0\)$. Then, $\min_{j\in \brii\(x_0\)}\alpha\(x_0,j\) \ge \max_{j\in \brii\(x\)}\alpha\(x,j\)$ for all $x\in X$, and hence $\alpha^L \geq \min_{j\in \brii\(x_0\)}\alpha\(x_0,j\) \geq \alpha^H$, which of course implies $\alpha^L=\alpha^H$. This condition is not sufficient to guarantee $\alpha^H=\max_{i,j}{\alpha\(i,j\)}$, as the following example shows
\begin{example} 
The $2\times 3$ bimatrix game with payoff matrices
\[A=\begin{pmatrix*}[r] 3 & 0 & 0\\ 0 & 2 & 4\end{pmatrix*}\qquad B=\begin{pmatrix*}[r] 3 & 0 & -1\\ 0 & 2 & -1 \end{pmatrix*}\]
satisfies the relaxed condition for player I, hence $\alpha^L=\alpha^H=3$, but $3<4=\max_{ij}\alpha\(i,j\)$. 
\end{example}

\subsection{Other generalizations of zero-sum games and some counterexamples}\label{other}
The result of Proposition \ref{wuc} does not apply to other generalizations of zero-sum games that have been studied in the literature. Such generalizations include the class of a-cooperative games, which is studied in a setting similar to ours by \cite{As80}, the class of almost strictly competitive games (asc) introduced by \cite{Au61}, the class of pre-tight games, \cite{Vi06}, and the class of strategically zero-sum games, \cite{Mo78}. We provide the main definitions of these classes for bimatrix games, but for the general case one is referred to the relevant works. \par

Let $\Gamma$ be a bimatrix game with strategy spaces $X,Y$ and payoff functions $\alpha, \beta$. A \textit{twisted equilibrium} of $\Gamma$ is a Nash equilibrium of the game $\tilde{\Gamma}$, which is played over the same strategy spaces $X,Y$ with payoff functions $\tilde{\alpha}:=-\beta$ and $\tilde{\beta}:=-\alpha$. We denote with $TE\(\Gamma\)$ and $TE\Pi\(\Gamma\)$ the sets of twisted equilibria and twisted equilibria payoffs of $\Gamma$. A pair of strategies $\(x^s,y^s\)$ is a called \textit{saddle-point} of $\Gamma$, if \[\alpha\(x,y^s\)\le \alpha\(x^s,y^s\)\le \alpha\(x^s,y\)\quad \text{and} \quad \beta\(x^s,y\)\le \beta\(x^s,y^s\)\le \beta\(x,y^s\)\] for all $x\in X, y\in Y$. We denote with $S\(\Gamma\)$ the set of saddle points of $\Gamma$. For every bimatrix game, $S\(\Gamma\)=NE\(\Gamma\)\cap TE\(\Gamma\)$. A pair of strategies $\(x,y\)$ is \textit{Pareto-optimal} if there is no other pair of strategies $\(x',y'\)$ giving at least as much to every player and more to some player. \par 

A bimatrix game $\Gamma$ is \textit{a-cooperative} if it has at least one Pareto-optimal twisted equilibrium. A bimatrix game $\Gamma$ is \textit{almost strictly competitive (asc)} if $S\(\Gamma\)\neq\emptyset$ and $NE\Pi\(\Gamma\)=TE\Pi\(\Gamma\)$. \par 

Similarly to wuc games, see section \ref{wucg}, \cite{As80} and \cite{Be02} observe that the classes of zero-sum, a-cooperative and asc bimatrix games satisfy the inclusion relation $\(\text{zero-sum}\) \subsetneq \(\text{a-cooperative}\) \subsetneq \(\text{asc}\)$. Over asc games, Nash equilibria payoffs of the simultaneous move game are constant and equal to the players' matrix game values, i.e. $(\alpha^N, \beta^N)=\(v_A, v_B\)$ for all $(x^N, y^N)$ of the simultaneous move game. \par

In any two-person asc game the unique Nash equilibrium payoff is also the unique twisted equilibrium payoff. \Citet{As80} use this property to prove that in any a-cooperative game, both leaders in the associated leadership games will commit to Nash equilibrium strategies of the simultaneous move game and consequently they will receive their matrix game values, i.e. for a-cooperative games \eqref{three} is true. \par

However, this property does not extend to the class of asc games. As shown below (see section \ref{trd}), the TrD is an asc game in which both players strictly improve their payoffs in the associated leadership games. In particular, \eqref{two} is true for TrD. \par

\Citet{Vi06} defines and studies the class of pre-tight games. A pure strategy $i$ (resp. j) of player I (resp.II) is called \textit{coherent} if it is played in a correlated equilibrium of $\Gamma$. A bimatrix game $\Gamma$ is called \textit{pre-tight} if in any correlated equilibrium all the incentive constraints for non deviating to a coherent strategy are tight. \Citet{Vi06} notes that the class of two-player pre-tight games strictly contains two player zero-sum games and games with a unique correlated equilibrium. TrD, has a unique correlated equilibrium and thus provides an example that in a pre-tight game, contrary to zero-sum games, the leader and the follower may strictly improve their payoffs in the associated leadership games $\gi$ and $\gii$. \par
\Citet{Mo78} define strategically zero-sum bimatrix games, which strictly contain zero-sum and strictly competitive games. A bimatrix game $\Gamma$ with payoff matrices $\(A,B\)$ is \textit{strategically zero-sum} if it is strategically equivalent to a zero-sum game, i.e. if there exists a matrix game with payoff functions $c,-c$ on the same strategy spaces, such that $\alpha\(x',y\)\ge\alpha\(x,y\)\iff c\(x',y\)\ge c\(x,y\)$ for all $x,x' \in X, y\in Y$ and $\beta\(x,y'\)\ge\beta\(x,y\)\iff c\(x,y'\)\le c\(x,y\)$ for all $x \in X, y,y'\in Y$. Additionally, they define the concept of a trivial game for a player and show that every $2\times 2$ bimatrix game that is trivial for a player is strategically zero-sum. They show that any strategically zero-sum game is best response equivalent\footnote{Two games are \textit{best response equivalent} if they have the same best response correspondence, see \cite{Ro74}.} to a zero-sum game and provide sufficient conditions for the converse to be true. They also show that in strategically zero-sum games a Nash equilibrium exists that dominates payoff-wise all other Nash equilibria of the simultaneous game. \par
Example \ref{conitzer} shows that in the class of strategically zero-sum games $\alpha^L$ may be strictly greater than the payoff of the leader at the dominant Nash equilibrium of the simultaneous game.
\begin{example}\label{conitzer}
Let $\Gamma$ be the $2\times 2$ bimatrix game
\[A=\begin{pmatrix*}[r] 2 & -1 \\ 3 & 0 \end{pmatrix*}\qquad B=\begin{pmatrix*}[r] 2 & 1\\ -1 & 0 \end{pmatrix*}\]
Then, strategy $s_1$ of player I is strictly dominated, thus the game is trivial for player I according to the triviality concept of 
\cite{Mo78}. The only Nash equilibrium of the simultaneous move game is $(x^N,y^N)=\(s_2,t_2\)$ with payoffs $(\alpha^N,\beta^N)=\(0,0\)$. 
In $\gi$, $\alpha^L=\alpha^H$ and player's I equilibrium strategy is $x^L=\(0.5,0.5\)$ which induces player II to play $t_3$. The resulting unique equilibrium yields payoffs $(\alpha^L,\beta^F)=\(2.5,0.5\)$. 
\end{example}

While best response equivalent games have the same set of Nash equilibria, the equilibria of their associated leadership games may differ. This is highlighted in the following example.
\begin{example}\label{31019} Let $\Gamma$ be the $2\times2$ bimatrix game
\[A=\begin{pmatrix*}[r] 3 & 10 \\ 1 & 9 \end{pmatrix*}\qquad B=\begin{pmatrix*}[r] 3 & 1\\ 8 & 9 \end{pmatrix*}\]
and $\Gamma'$ the bimatrix game 
\[A'=\begin{pmatrix*}[r] 3 & 2\\ 1 & 1\end{pmatrix*}\qquad B'=B\]
$\Gamma'$ results from $\Gamma$ by the affine transformation $\alpha'\(\cdot, t_4\)=\alpha\(\cdot, t_4\)-8$. $\Gamma$ and $\Gamma'$ are best response equivalent, since the best reply regions $Y\(i\), Y'\(i\)$ for $i=1,2$ and $X\(j\), X'\(j\)$ for $j=3,4$ are 
\[\begin{array}{ll}
Y\(1\)=Y'\(1\)=Y, & Y\(2\)=Y'\(2\)=\emptyset \\
X\(3\)=X '\(3\)=\left [\frac13,1\right ], & X\(4\)=X'\(4\)=\left [0, \frac13\right ]
\end{array}\]
and hence they have the same set of Nash equilibria, which is the singleton $\(s_1,t_3\)$ with payoffs $\(\alpha^N,\beta^N\)=\(3,3\)$. However, the unique leader equilibrium of $\gi$ is 
\[\(x^L,j^F\)=\(\(\frac13,\frac23\),t_4\), \quad \(\alpha^L,\beta^F\)=\(9\frac13, 6\frac13\)\]
while the unique leader equilibrium of $\Gamma'^{\text{I}}$ is 
\[\(x'^L,j'^F\)=\(s_1,t_3\), \quad \(\alpha'^L,\beta'^F\)=\(3,3\)\]
\end{example}

\subsection{Traveler's dilemma}\label{trd}
Traveler's dilemma (TrD) was first introduced by \cite{Ba94} and quickly attracted widespread attention as a game where rationality leads to a difficult to accept Nash equilibrium solution. In TrD, the equilibrium solutions of the simultaneous game and of the associated leadership games differ significantly. Notably, the associated leadership equilibria strategies are much closer to the behavior of the players, as observed in experiments, than the unique Nash equilibrium of the simultaneous move game. This discrepancy generates a strong motivation to study leadership games. \par
TrD is a symmetric bimatrix game with strategy spaces $I=J=\{2,3,\ldots,99,100\}$ and payoffs \[\alpha\(i,j\)=\beta\(j,i\)=\begin{cases}i+2,& i<j\\ i, & i=j\\ j-2,& i>j\end{cases}\]
The best reply correspondence of player I against $j\in J$ is \[\operatorname{BR^I}\(j\)=\begin{cases}j-1, &j>2\\ 2, & j=2\end{cases}\] and similarly for player II. The game can be solved with the process of iterated elimination of strongly dominated strategies (iesds) and hence it has the fictitious play property (see \cite{Mo96}). In the first round of elimination, the pure strategy $\(100\)$ is strongly dominated by the mixed strategy $x_{\epsilon}:=\(1-\epsilon\)\(99\)+\epsilon\(2\)$ (for $0<\epsilon<\frac1{97}$) and hence eliminated. The process of iesds successively deletes all strategies except the pure strategy $\(2\)$, which results to the pure strategy profile $\(x,y\)=\(2,2\)$ being the unique Nash equilibrium. Hence, the equilibrium $\(2,2\)$ survives any Nash equilibrium refinement concept. It is also the unique correlated equilibrium and the unique rationalizable strategy profile\footnote{See \cite{Be84} and \cite{Pe84} for the definition of rationalizable strategic behavior.}, implying that neither correlation nor rationalizability may improve upon this Nash equilibrium. The TrD has attracted interest mostly due to the fact that this unique solution, although having a very strong theoretical argument in its favor since it is derived by iesds, is inefficient in terms of the social welfare, counter-intuitive and differs significantly from the observed behavior of the players in conducted experiments. In the penultimate round of elimination the game corresponds to a prisoner's dilemma
\[A=\begin{pmatrix*}[r]3 & 0\\ 4 & 2 \end{pmatrix*}\qquad B=\begin{pmatrix*}[r]3 & 4\\ 0 & 2 \end{pmatrix*}\]
Hence, TrD may be viewed as a generalization of prisoner's dilemma to $n$ strategies. Contrary to the intuition that cooperation should be beneficial for both players, this is not the case as TrD turns out to be an almost strictly competitive game, since
\begin{enumerate}[leftmargin=0cm,itemindent=.5cm, labelwidth=\itemindent,labelsep=0cm, noitemsep, align=left]
\item $S\(\Gamma\)=\{\(2,2\)\}$
\item $NE\Pi\(\Gamma\)=TE\Pi\(\Gamma\)=\{2,2\}$
\end{enumerate}
a fact that forces the players' payoffs to their matrix game values. TrD resembles Bertrand duopoly, an observation made already in \cite{Ba94}. \Citet{Ha12} apply their new solution concept of iterated regret minimization to TrD and derive a satisfactory solution. \par

Although correlated equilibrium or rationalizability do not improve upon the Nash equilibrium outcome in TrD, coarse correlation and leadership significantly do so. It is straightforward to check that the distribution $\(z_{ij}\)_{\(i,j\)\in I\times J}$ with 
\[z_{ij}=\begin{cases}\frac12, & i=j=100, \text{ and } i=j=98 \(\text{ or } i=j=97\),\\0, & \text{ else }\end{cases}\]
is a symmetric coarse correlated equilibrium, with payoffs equal to $99$ (or $98.5$) for each player. Note that starting from this distribution, one may see that there exists a great multitude of coarse correlated equilibria in TrD. However, the Pareto-optimal outcome $\(100,100\)$ is not a coarse correlated equilibrium. \par

The unique equilibrium of $\gi$ (commitment optimal strategy of the leader and optimal response of the follower) may be calculated by equation \eqref{alphaH} after considerable simplifications in the strategy spaces and is given by \[x^L=\frac13\(100\)+\frac13\(99\)+\frac13\(97\), \quad j^F=\(99\)\] with payoffs $\alpha^L=\beta^F=98\frac13$. The leader-follower approach to a solution concept works well for TrD since payoffs are the same for both players, irrespective of who moves first, and they are very close to the Pareto optimal outcome. In other words, if the rules of the game permit commitment, one expects the leader-follower equilibrium to prevail over the Nash equilibrium of the simultaneous move game. 

\section{\texorpdfstring{$2\times2$}{j} bimatrix games}\label{twobim}
In bimatrix games, where $I=\{1,2\}$ and $J=\{3,4\}$, we may derive some special properties for the associated leadership games. Let the payoff matrices $\(A,B\)$ be
\[A=\begin{pmatrix*}[r] a_1 & a_2\\ a_3 & a_4\end{pmatrix*}\qquad B=\begin{pmatrix*}[r]b_1 & b_2\\ b_3 & b_4 \end{pmatrix*}\]
In this case, when solving $\gi$, the possible edges of the best reply regions $X\(3\)$ and $X\(4\)$ are $C\(3\)\cup C\(4\)=\left\{s_1,x^d,s_2\right\}$, where $x^d:=\(1-d,d\)$ denotes the equalizing strategy of player I over player II's payoffs with 
\[d:=\frac{b_1-b_2}{b_1-b_2+b_4-b_3},  \quad \text{if }b_1-b_2+b_4-b_3\neq 0 \,\, \text{ and }\, 0\le d\le 1\,.\] 
Similarly, in $\gii$ the possible edges of the best reply regions $Y\(1\)$ and $Y\(2\)$ are $C\(1\)\cup C\(2\)=\left\{t_3,y^c,t_4\right\}$, where $y^c:=\(1-c,c\)$ denotes the equalizing strategy of player II over player I's payoffs, i.e. \[c:=\frac{a_4-a_3}{a_1-a_2+a_4-a_3}, \quad \text{if } a_1-a_2+a_4-a_3\neq 0 \,\, \text{ and }\, 0\le c\le 1\,.\] 
For a $2\times 2$ bimatrix game that is degenerate for the leader\footnote{Without loss of generality, we henceforth assume that the leader is player I.}, we first show that a stronger statement than that of Proposition \ref{pure} holds, namely that \textit{any} commitment optimal strategy $x^L$ is a Nash equilibrium strategy of the simultaneous move game.   
\begin{lemma}\label{alldeg}
If a $2\times 2$ bimatrix game $\Gamma$ is degenerate for the leader, then $X^L\subseteq NE\(X\)$. 
\end{lemma}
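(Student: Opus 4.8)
The plan is to pin down exactly when a $2\times 2$ game is degenerate for the leader and then to read off $X^L$ explicitly in each resulting regime. Since a mixed strategy of player I has support of size at most $2$ while player II has only two pure strategies, the only way player I can have more best replies than his support size is for a \emph{pure} strategy to admit both $t_3$ and $t_4$ as best replies. Hence degeneracy for the leader is equivalent to $b_1=b_2$ (player II indifferent against $s_1$) or $b_3=b_4$ (player II indifferent against $s_2$). I would split into two regimes: (a) exactly one of these equalities holds, and (b) both hold.

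Regime (a): say $b_1=b_2$ with $b_3\neq b_4$ (the three remaining subcases are symmetric after relabelling $s_1\leftrightarrow s_2$ and/or $t_3\leftrightarrow t_4$). Then $\beta\(x,t_3\)-\beta\(x,t_4\)=p\(b_3-b_4\)$ has constant sign for $p\in(0,1]$, so one of player II's pure strategies, call it $t^\ast$, is a best reply against \emph{every} $x\in X$, while the other is weakly dominated with best-reply region the single vertex $\{s_1\}$. Thus $D=\{t^\ast\}$ and $\mathcal E\(t^\ast\)=\{t^\ast\}$ (the columns are not payoff equivalent, as $b_3\neq b_4$), so \eqref{alphaL} collapses to $\alpha^L=\max_{x\in X}\alpha\(x,t^\ast\)$ and therefore $X^L=\argmax_{x\in X}\alpha\(x,t^\ast\)=\bri\(t^\ast\)$. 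The conclusion is then immediate: any $x^L\in X^L$ satisfies $x^L\in\bri\(t^\ast\)$ by construction, while $t^\ast\in\brii\(x^L\)$ since $t^\ast$ is a best reply everywhere; hence $\(x^L,t^\ast\)\in NE\(\Gamma\)$ and $x^L\in NE\(X\)$. Note that the $\bri\(t^\ast\)$ description handles ties (where $\alpha\(\cdot,t^\ast\)$ is constant and $X^L=X$) automatically.

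Regime (b): $b_1=b_2$ and $b_3=b_4$, so $t_3$ and $t_4$ are payoff equivalent for player II, giving $\mathcal E\(t_3\)=\{t_3,t_4\}$ and $X\(t_3\)=X\(t_4\)=X$. Now \eqref{alphaL} reads $\alpha^L=\max_{x\in X}\min\{\alpha\(x,t_3\),\alpha\(x,t_4\)\}=v_A$, and $X^L$ is precisely the set of maximin (matrix-game optimal) strategies of $A$. To finish I would invoke the minimax theorem for $A$: fix any minimax strategy $y^\ast$ of player II in the matrix game $A$. For every optimal $x^L$ the pair $\(x^L,y^\ast\)$ is a saddle point of $A$, so $x^L\in\bri\(y^\ast\)$; and since the two columns of $B$ are payoff equivalent, player II is indifferent against $x^L$, whence $y^\ast\in\brii\(x^L\)$. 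Therefore $\(x^L,y^\ast\)\in NE\(\Gamma\)$ and $x^L\in NE\(X\)$.

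The routine part is the case bookkeeping and the two reductions of \eqref{alphaL}; the genuinely load-bearing step is reading off $X^L$ correctly from the degenerate formula. The main obstacle is regime (b): unlike regime (a), there is no single inducible column forcing the equilibrium, so one must show that \emph{every} maximin strategy of $A$ — not just some convenient one — is a best reply to one common $y^\ast$. The saddle-point characterization of optimal matrix-game strategies (any maximin $x$ and any minimax $y$ form a saddle point, so $x\in\bri\(y\)$) is exactly what supplies this and simultaneously covers the ties where $X^L$ is a whole face of $X$.
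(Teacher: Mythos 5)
Your proof is correct and follows essentially the same route as the paper's: the same two-case decomposition of degeneracy for the leader (one column weakly dominated versus the two columns payoff-equivalent), the same reduction of equation \eqref{alphaL} to $\max_{x\in X}\alpha(x,t^{\ast})$ and to $v_A$ respectively, and an identical argument in the weakly dominated case. The only divergence is the concluding step of the payoff-equivalent case, where you pair every maximin $x^L$ with a fixed minimax $y^{\ast}$ of the matrix game $A$ via the saddle-point property, whereas the paper argues that a strategy guaranteeing $v_A$ cannot be strongly dominated and hence is a best response to some $y\in Y$; both are sound and of comparable length.
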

\begin{proof} If the game is degenerate for player I, then player II has either a weakly dominated strategy or two payoff equivalent strategies. \par 
In the first case, assume the weakly dominated strategy is $t_4$. This implies $X\(3\)=X, X^{\mathrm{o}}\(4\)=\emptyset, D=\{t_3\}$ and hence by equation \eqref{alphaL}, \[\alpha^L=\max_{j\in D}\left\{\max_{x\in X\(j\)}\min_{k\in \mathcal{E}\(j\)}\alpha\(x,k\)\right\}=\max_{x\in X\(3\)}\alpha\(x,t_3\)=\max_{x\in X}\alpha\(x,t_3\)\] which implies that $X^L=\argmax_{x\in X} \alpha\(x,t_3\)=\bri\(t_3\)$. Thus, any strategy that guarantees player I his payoff $\alpha^L$ in $\gi$ is a best response against $t_3$, which together with $X\(3\)=X$ implies that $\(x^L,t_3\)$ is a Nash equilibrium in the simultaneous move game for all $x^L \in X^L$, i.e. that $X^L\subseteq NE\(X\)$. \par 

On the other hand, if $t_3$ and $t_4$ are payoff equivalent, then $X\(3\)=X\(4\)=X$ and $\mathcal E \(t_3\)=\{t_3,t_4\}$. By equation \eqref{alphaL}, $\alpha^L=\max_{j\in J}\left\{\max_{x\in X}\min_{j\in J}\alpha\(x,j\)\right\}=v_A$. Obviously, any strategy $x^L\in X^L$ that guarantees player I his safety level is not a strongly dominated strategy. Since strongly dominated strategies and strategies that are never best responses coincide in bimatrix games, any $x^L \in X^L$ must be a best response against some $y\in Y$, which together with $Y=\brii\(x\)$ for any $x\in X$, implies $X^L\subseteq NE\(X\)$. 
\end{proof}
The statement of Lemma \ref{alldeg} is not true if we consider $x^H$ instead of $x^L$.
\begin{example}\label{4231} The $2\times 2$ bimatrix game $\Gamma$ with payoff matrices
\[A=\begin{pmatrix*}[r] 4 & 2 \\ 3 & 1\end{pmatrix*}\qquad B=\begin{pmatrix*}[r] 1 & 2\\ 0 & 0 \end{pmatrix*}\]
has a unique Nash equilibrium which is given by $\(x^N,y^N\)=\(s_1,t_4\)$, with payoffs $\(\alpha^N,\beta^N\)=\(2,2\)$. This is also $\alpha^L$ for player I in $\gi$. However, $x^H=s_2$ with payoff equal to $\alpha^H=3$, where $s_2$ is obviously not a Nash equilibrium strategy as it is strongly dominated by $s_1$. 
\end{example}
For a $2\times 2$ bimatrix game that is non-degenerate for the leader, we may partially strengthen the statement of Proposition \ref{pure}. In these games, commitment optimal strategies are Nash equilibrium strategies provided that no pure strategy of the leader is strongly dominated.
\begin{lemma}\label{allndeg}
If a $2\times 2$ bimatrix game $\Gamma$ is non-degenerate for the leader and no pure strategy of the leader is strongly dominated, then $X^L \subseteq NE\(X\)$.
\end{lemma}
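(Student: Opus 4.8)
The plan is to reduce to the genuinely mixed case and then pin down the only two kinds of completely mixed commitment-optimal strategies, treating both by a single device: exhibiting a follower reply $y^\ast$ against which the leader is indifferent between his two pure strategies. First I would dispose of pure $x^L$: if $x^L\in X^L$ is pure, Proposition \ref{pure} (which needs only non-degeneracy for the leader) gives $x^L\in NE\(X\)$ at once, so only completely mixed $x^L$ remain. For these I exploit the one-dimensional geometry: $X$ is the segment $\left[s_1,s_2\right]$, the best reply regions have edges in $\left\{s_1,x^d,s_2\right\}$, and $\alpha\(\cdot,j\)$ is linear on each $X\(j\)$. Since non-degeneracy for the leader forces $b_1\neq b_2$ and $b_3\neq b_4$, the equalizer $x^d$ is the \emph{only} strategy of player I with two pure best replies; hence a completely mixed $x^L\in X^L$ either (a) equals $x^d$, or (b) has a unique best reply $j$ and lies in the relative interior of $X\(j\)$. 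Using $\alpha^L=\alpha^H=\max_{j}\max_{x\in X\(j\)}\alpha\(x,j\)$ and the fact that a linear function on a segment attains its maximum at an interior point only if it is constant, case (b) forces $\alpha\(\cdot,j\)$ to be constant on $X\(j\)$.

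The unifying step is this: if I can find $y^\ast\in\brii\(x^L\)$ against which player I is indifferent, then every $x$ (hence $x^L$) lies in $\bri\(y^\ast\)$, while $y^\ast\in\brii\(x^L\)$ by choice, so $\(x^L,y^\ast\)$ is a Nash equilibrium and $x^L\in NE\(X\)$. Case (b) is then immediate and requires no extra hypothesis: constancy of $\alpha\(\cdot,j\)$ on $X\(j\)$ means $a_1=a_3$ when $j=t_3$ (resp. $a_2=a_4$ when $j=t_4$), which is exactly indifference of player I against the \emph{pure} reply $j$; since the interior point $x^L\neq x^d$ has $j$ as its unique best reply, taking $y^\ast=j$ works.

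The main obstacle is case (a), $x^L=x^d$, and this is precisely where the no-strong-domination hypothesis enters. At $x^d$ the follower is indifferent, so $\brii\(x^d\)=Y$ and \emph{any} $y^\ast$ satisfies the membership requirement; what must be guaranteed is the existence in $Y$ of an equalizer of player I. Writing $g\(y\):=\alpha\(s_1,y\)-\alpha\(s_2,y\)$, which is linear in $y$ with $g\(t_3\)=a_1-a_3$ and $g\(t_4\)=a_2-a_4$, such a $y^\ast$ exists in $Y$ if and only if $g\(t_3\)g\(t_4\)\le 0$, i.e. $\(a_1-a_3\)\(a_2-a_4\)\le 0$. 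But $s_1$ strongly dominates $s_2$ exactly when $a_1>a_3$ and $a_2>a_4$, and $s_2$ strongly dominates $s_1$ exactly when $a_3>a_1$ and $a_4>a_2$; excluding both is equivalent to $\(a_1-a_3\)\(a_2-a_4\)\le 0$. Thus the no-domination hypothesis is exactly the condition that an equalizer $y^c$ of player I lies in $Y$ (possibly a vertex of $Y$, or, when the rows of $A$ coincide, any point of $Y$), and then $\(x^d,y^c\)$ is the sought Nash equilibrium.

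I would close by noting that the hypothesis is sharp, which also guards against an error in the argument: in Example \ref{31019} the leader's row $s_1$ strictly dominates $s_2$, so $g\(t_3\)g\(t_4\)>0$, the equalizer escapes $Y$, and the mixed commitment-optimal strategy $x^L=x^d$ is indeed \emph{not} a Nash equilibrium strategy. The only subtlety to watch in the write-up is to confirm that in cases where one best reply region is empty (a follower strategy dominated), the completely mixed candidates fall entirely under case (b), so that case (a) is invoked only when $x^d$ is genuinely interior; this is automatic from the dichotomy above, since a completely mixed strategy with a unique best reply always lands in case (b).
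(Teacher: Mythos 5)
Your proof is correct and follows essentially the same route as the paper's: pure commitment optimal strategies are handled by Proposition \ref{pure}, a commitment optimal strategy in the relative interior of a best-reply region forces $a_1=a_3$ (or $a_2=a_4$) by linearity and hence pairs with the corresponding pure reply, and the equalizer $x^d$ pairs with player II's equalizer $y^c$, whose existence is exactly what the no-strong-domination hypothesis guarantees. The only difference is organizational --- you decompose point-by-point over $x^L\in X^L$ rather than enumerating the possible sets $X^L$ as the paper does --- and your explicit equivalence between no domination and $(a_1-a_3)(a_2-a_4)\le 0$ makes the role of that hypothesis somewhat more transparent.
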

\begin{proof} If player II has a strongly dominated strategy, say $t_4$, then $X\(3\)=X, X\(4\)=\emptyset$, $\alpha^L=\max_{x\in X}{\alpha\(x,t_3\)}$ and the result follows as in the previous Lemma. If player II has no strongly dominated strategy, then non-degeneracy implies that the payoffs of $B$ satisfy $b_1>b_2$ and $b_3<b_4$ $($if necessary by rearranging $B)$. Hence, $X\(3\)=\left[s_1,x^d\right]$ and $X\(4\)=\left[x^d,s_2\right]$, and by equation \eqref{alphaL}, $\displaystyle \alpha^L=\max{\left\{\max_{x\in \,\left[s_1,x^d\right]}{\alpha\(x,t_3\)},\max_{x\in\, \left[x^d,s_2\right]}{\alpha\(x,t_4\)}\right\}}$. If $X^L\subseteq I$, then $X^L\subseteq NE\(X\)$ by Proposition \ref{pure}. If $X^L=X$, then $\alpha\(x, y\)$ must be constant over $X,Y$ and the assertion holds trivially. If $X^L=\left[s_1,x^d\right]$, then $a_1=a_3>a_4$, hence $\bri\(t_3\)=X$ and the assertion follows. Similarly, if $X^L=\left[x^d,s_2\right]$. Finally, if $X^L=\left\{x^d\right\}$, then the pair of equalizing strategies $\(x^d,y^c\)$ is a Nash equilibrium of $\Gamma$ and the assertion follows. Notice that in this last case the existence of $y^c$ is guaranteed by the assumption that player I has no strongly dominated strategy.
\end{proof}
The next Lemma provides a necessary and sufficient condition for a commitment optimal strategy of a player \emph{not} to be a Nash equilibrium strategy in a non-degenerate $2 \times 2$ bimatrix game. If the condition obtains, then all commitment optimal strategies of the other player are Nash equilibrium strategies.
\begin{lemma}\label{domin}
If a $2\times 2$ bimatrix game $\Gamma$ is non-degenerate for the leader, then he may obtain his commitment value $\alpha^L$ using a strategy $x^L \notin NE\(X\)$ if and only if he has both
\begin{enumerate}[label=$\(\ell\arabic*\)$, leftmargin=0.2cm, topsep=0cm, itemindent=.5cm, labelwidth=\itemindent, noitemsep, align=left]
\item a strongly dominated strategy in $\Gamma$ and 
\item \sloppy an equalizing strategy $x^d=\(1-d,d\)$ over the follower's payoffs, such that $\alpha\(x^d, j\)\ge \alpha^N$ for some $j\in J$.
\end{enumerate}
Then, in game $\gii$, $Y^L= NE\(Y\)$.
\end{lemma}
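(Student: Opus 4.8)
The plan is to reduce the whole statement to the piecewise--linear picture of the leader's guaranteed payoff and to combine it with Lemma \ref{allndeg} and one elementary computation. I normalize by relabeling so that, when the leader has a strongly dominated strategy, it is $s_2$ (so $s_1$ strongly dominates $s_2$: $a_1>a_3$, $a_2>a_4$), and so that $b_1>b_2$, $b_3<b_4$; this is the normalization of Lemma \ref{allndeg} and it puts $X\(3\)=\left[s_1,x^d\right]$, $X\(4\)=\left[x^d,s_2\right]$ whenever the equalizer $x^d$ exists. The computation I rely on is that $x^d=(1-d,d)$ lies strictly inside the simplex exactly when $b_1-b_2$ and $b_4-b_3$ share a sign, i.e. exactly when player II has no strongly dominated strategy; symmetrically, player II's equalizer $y^c$ over player I's payoffs exists exactly when player I has no strongly dominated strategy. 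Thus $(\ell1)$ is equivalent to the nonexistence of $y^c$, and the $x^d$ required by $(\ell2)$ can exist only if player II has no strongly dominated strategy.

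In this normalization the leader's guaranteed payoff at $x=(1-p,p)$ equals $\alpha\(x,t_3\)$ on $\left[s_1,x^d\right]$ and $\alpha\(x,t_4\)$ on $\left[x^d,s_2\right]$, taking the value $\max\{\alpha\(x^d,t_3\),\alpha\(x^d,t_4\)\}$ at the breakpoint by the inducibility $\epsilon$-argument of section \ref{exist}. As it is piecewise linear, its maximum is attained at a breakpoint, so $\alpha^L=\max\{a_1,\max_j\alpha\(x^d,j\),a_4\}$. Three facts govern the comparison with $NE\(X\)$. (i) Since $s_1$ strongly dominates $s_2$, no equilibrium uses $s_2$ and the unique best reply to $s_1$ is $t_3$ (because $b_1>b_2$); hence the unique Nash equilibrium is $\(s_1,t_3\)$, with $NE\(X\)=\{s_1\}$ and $\alpha^N=a_1$. (ii) As $\alpha\(x^d,t_4\)=(1-d)a_2+d\,a_4>a_4$, we get $\max_j\alpha\(x^d,j\)>a_4$, so $s_2$ is never commitment optimal and no maximizer lies in the interior of the $t_4$-piece. (iii) Because $0<d<1$ we have $x^d\neq s_1$, hence $x^d\notin NE\(X\)=\{s_1\}$; moreover a one-line linearity argument---a linear function attaining its maximum at an interior point of a segment is constant on it---shows that any commitment optimal strategy other than $s_1$ forces $x^d$ itself to be commitment optimal.

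For the ``only if'' direction, suppose a commitment optimal $x^L\notin NE\(X\)$ exists. Non-degeneracy for the leader lets me invoke the contrapositive of Lemma \ref{allndeg}, yielding $(\ell1)$. Player II then has no strongly dominated strategy, for otherwise $X\(3\)=X$ and the first-case argument in the proof of Lemma \ref{allndeg} (which uses only $X\(3\)=X$) would give $X^L=\bri\(t_3\)\subseteq NE\(X\)$, a contradiction. Hence $x^d$ exists and the picture above applies; by fact (iii) the presence of a witness $x^L\neq s_1$ forces $x^d\in X^L$, i.e. $\max_j\alpha\(x^d,j\)=\alpha^L\ge a_1=\alpha^N$, which together with the existence of $x^d$ is precisely $(\ell2)$. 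Conversely, if $(\ell1)$ and $(\ell2)$ hold, then $x^d$ exists (so player II has no strongly dominated strategy and the picture applies) and $\max_j\alpha\(x^d,j\)\ge\alpha^N=a_1$; since also $\max_j\alpha\(x^d,j\)>a_4$ by fact (ii), the formula for $\alpha^L$ gives $\alpha^L=\max_j\alpha\(x^d,j\)$, so $x^d\in X^L$, while fact (iii) gives $x^d\notin NE\(X\)$. Thus $x^L=x^d$ is a commitment optimal strategy outside $NE\(X\)$.

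Finally, under $(\ell1)$ and $(\ell2)$ I prove $Y^L=NE\(Y\)$ in $\gii$. First, $\Gamma$ is non-degenerate for player II: player I's unique best reply to each of $t_3,t_4$ is the dominant $s_1$, and player I is never indifferent since $y^c$ does not exist, so no strategy of II carries two pure best replies. As player II has no strongly dominated strategy, Lemma \ref{allndeg} applied with II as leader gives $Y^L\subseteq NE\(Y\)$. The reverse inclusion is direct: in $\gii$ the follower (player I) plays the dominant $s_1$ against every commitment $y=(1-q,q)$, so II maximizes $\beta\(s_1,y\)=(1-q)b_1+q\,b_2$, which, as $b_1>b_2$, is uniquely maximized at $y=t_3$; hence $Y^L=\{t_3\}=NE\(Y\)$. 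The main difficulty is organizational rather than deep: one checks that the row- and column-relabelings are compatible, applies the equivalence ``strong dominance $\iff$ equalizer outside the open simplex'' to the correct player in each of $\Gamma$, $\gi$, $\gii$, and uses the linearity argument to guarantee that any commitment optimal strategy other than $s_1$ makes $x^d$ commitment optimal.
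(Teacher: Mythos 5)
Your proof is correct and follows essentially the same route as the paper's: normalize so the dominated row and the ordering of $B$'s columns are fixed, identify the unique Nash equilibrium, compute $\alpha^L$ over the edge set $\left\{s_1,x^d,s_2\right\}$ so that $(\ell 2)$ becomes exactly the condition for $x^d$ to be commitment optimal, use Proposition \ref{pure} and Lemma \ref{allndeg} for the converse, and note that the follower's dominant strategy pins down $Y^L=NE\(Y\)$ in $\gii$. Your write-up is somewhat more explicit than the paper's (e.g., ruling out a dominated strategy for player II in the converse, and classifying which points of the edge can be commitment optimal), but these are elaborations of the same argument rather than a different approach.
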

\begin{proof}
Let $\(\ell 1\)$ and $\(\ell 2\)$ be true and take $s_1$ to be the strongly dominated strategy of the leader, i.e. $a_3>a_1$ and $a_4>a_2$. Since player I has an equalizing strategy $x^d$ over player II's payoffs, player II has no strongly dominated strategy. Since the game is non-degenerate for the leader, player II has no weakly dominated strategy. So, by rearranging if necessary, $b_1>b_2$ and $b_3<b_4$. This game has a unique Nash equilibrium which is the pure strategy profile $\(s_2,t_4\)$ with payoffs $\(\alpha^N,\beta^N\)=\(a_4,b_4\)$, hence $NE\(X\)=\left\{s_2\right\}$. The edges of the best reply regions of player II are: $C\(3\)\cup C\(4\)=\left\{s_1,x^d,s_2\right\}$ with corresponding payoffs for player I 
\begin{align*}\alpha\(x, t_3\)_{x\in C\(3\)}&=\left\{a_1,\(1-d\)a_1+da_3\right\}=\left\{a_1,a_1+d\(a_3-a_1\)\right\}
\\ \alpha\(x, t_4\)_{x\in C\(4\)}&=\{\(1-d\)a_2+da_4,a_4\}=\left\{a_4+\(1-d\)\(a_2-a_4\),a_4\right\}\end{align*}
Now, $a_3>a_1$ implies $a_1+d\(a_3-a_1\)>a_1$ and $a_4>a_2$ implies $a_4+\(1-d\)\(a_2-a_4\)<a_4$, hence $\alpha^L=\max{\left\{\alpha\(x^d, t_3\), \alpha\(s_2,t_4\)\right\}}=\max{\left\{a_1+d\(a_3-a_1\), a_4\right\}}$. Therefore, by $\(\ell 2\)$, $x^d\in X^L$. \par
For the converse, if $x^L \notin NE\(X\)$, by Proposition \ref{pure}, $x^L$ may not be a pure strategy. By Lemma \ref{allndeg}, player I must have a strongly dominated strategy. But then, proceeding as above, we conclude that $\(\ell 2\)$ must also be true. \par
Finally, to show that $Y^L= NE\(Y\)$ in $\gii$, observe that since player I has a strongly dominated strategy, if the roles of the players are reversed and II becomes leader, then he obtains $\beta^L$ in $\gii$ by using any of his Nash equilibrium strategies of the simultaneous move game.
\end{proof}
Combining the results of Lemmas \ref{alldeg}, \ref{allndeg}, and \ref{domin}, we obtain a clear picture of the relationship between Nash equilibria of the simultaneous move game $\Gamma$ and commitment optimal strategies in the leader-follower games $\gi$ and $\gii$. 
\begin{proposition}\label{twobytwo}
In a $2\times 2$ bimatrix game $\Gamma$: 
(a) If either $\gi$ is degenerate for player \I or $\gii$ is degenerate for player \II, then both $X^L\subseteq NE\(X\)$ and $Y^L\subseteq NE\(Y\)$. 
(b) If both $\gi$ is non-degenerate for player \I and $\gii$ is non-degenerate for player \II, then either 
(i) in $\gi$ there exists $x^L \notin NE\(X\)$ and in $\gii$, $Y^L=NE\(Y\)$ or 
(ii) in $\gii$ there exists $y^L \notin NE\(Y\)$ and in $\gi$, $X^L=NE\(X\)$ or 
(iii) $X^L\subseteq NE\(X\)$ and $Y^L\subseteq NE\(Y\)$.
\end{proposition}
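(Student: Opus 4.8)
The statement is a synthesis of Lemmas \ref{alldeg}, \ref{allndeg} and \ref{domin}, so the plan is to set up a short case analysis organised around two dichotomies: whether each leadership game is degenerate for its leader, and whether a player owns a strongly dominated pure strategy. As a preliminary I would record one observation peculiar to the $2\times2$ setting: \emph{if $\Gamma$ is degenerate for one player, then the other player has no strongly dominated strategy}. In a $2\times2$ game, degeneracy for player I can only come from a pure strategy of I against which II is indifferent, i.e. from $b_1=b_2$ or $b_3=b_4$; either equality is incompatible with the two strict inequalities that strong domination of one column of $B$ by the other would demand. Feeding this into the two lemmas gives a convenient \emph{unified claim}: if a player has no strongly dominated pure strategy then his commitment-optimal set lies in his Nash set, because when his leadership game is non-degenerate this is Lemma \ref{allndeg}, and when it is degenerate it is Lemma \ref{alldeg}, which carries no domination hypothesis at all.

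For part (a), the hypothesis and the conclusion are both symmetric under interchanging the two players, so I would assume without loss of generality that $\gi$ is degenerate for its leader. Lemma \ref{alldeg} then delivers $X^L\subseteq NE(X)$ outright. By the preliminary observation player II owns no strongly dominated strategy, so the unified claim applied to the role-reversed game $\gii$ yields $Y^L\subseteq NE(Y)$, completing part (a).

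For part (b) both leadership games are non-degenerate for their leaders, so Lemma \ref{domin} and its role-reversed twin are in force. I would branch sequentially. First test whether $X^L\subseteq NE(X)$: if it fails there is an $x^L\notin NE(X)$, and Lemma \ref{domin} then forces $Y^L=NE(Y)$, which is precisely alternative (i). Otherwise $X^L\subseteq NE(X)$ and I test whether $Y^L\subseteq NE(Y)$: if that fails there is a $y^L\notin NE(Y)$ and the role-reversed Lemma \ref{domin} forces $X^L=NE(X)$, i.e. alternative (ii); if it holds we are in alternative (iii). Exactly one branch is taken, so the three alternatives are exhaustive; they are moreover mutually exclusive, since the equality $Y^L=NE(Y)$ supplied in branch (i) already rules out any $y^L\notin NE(Y)$.

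The only content beyond bookkeeping sits in two places, and these are where I would take care. The first is the preliminary observation tying leader-degeneracy to the absence of a strongly dominated follower strategy, which is what licenses the use of Lemma \ref{allndeg} for the non-leading player in part (a); it must be verified to survive the harmless rearrangements of $B$ that are used to normalise the best-reply regions. The second is reading the ``then $Y^L=NE(Y)$'' and ``then $X^L=NE(X)$'' clauses of Lemma \ref{domin} as genuine equalities rather than mere inclusions, since it is precisely the equality that guarantees both the exhaustiveness and the mutual exclusivity of the three alternatives. Neither step needs fresh computation, so I anticipate no real obstacle; the one risk is mis-tracking which inclusion or equality each lemma actually provides.
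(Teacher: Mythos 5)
Your proof is correct and follows essentially the same route as the paper: both assemble the result from Lemmas \ref{alldeg}, \ref{allndeg} and \ref{domin}, with the same key observation that degeneracy of $\Gamma$ for one player rules out a strongly dominated strategy for the other. The only cosmetic difference is in part (b), where the paper re-derives the incompatibility of conditions $\(\ell 1\)$ and $\(\ell 2\)$ holding simultaneously for both players via the equalizing-strategy argument, whereas you obtain the same exclusivity directly from the final clause $Y^L=NE\(Y\)$ of Lemma \ref{domin}; both are valid.
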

\begin{proof}
Using Lemma \ref{alldeg}, for part (a) it suffices to show that if $\Gamma$ is degenerate for player I, then $Y^L\subseteq NE\(Y\)$. To show this, it is sufficient to show that condition $\(\ell 1\)$ of Lemma \ref{domin} never obtains when player II is a leader (i.e in $\gii$). This is true since the assumption that  the game is degenerate for player I implies that II has no strictly dominated strategy (II has either a weakly dominated strategy or his two strategies are payoff equivalent). For part (b), it suffices to show that conditions $\(\ell 1\)$ and $\(\ell 2\)$ may not obtain at the same time for $\gi$ and $\gii$. Indeed, if player I has a strictly dominated strategy (i.e. $\(\ell 1\)$ is true for player I in $\gi$), then player II may may not have an equalizing strategy over player I's payoffs (i.e. $\(\ell 2\)$ may not be true for player II in $\gii$), and if player I has an equalizing strategy over player II's payoffs in $\gi$, then player II may not have a strictly dominated strategy in $\Gamma$.
\end{proof}
A statement related to Proposition \ref{twobytwo} under a specific context can be found in \cite{St16}. Proposition \ref{twobytwo} does not imply that the commitment value $\alpha^L$ coincides with the leader's Nash equilibrium payoff in the simultaneous move game when $X^L\subseteq NE\(X\)$. It may be the case that $x^L$ induces the follower to use the most favourable best reply for the leader. In that case, the leader will get a higher payoff in $\gi$ than in $\Gamma$.
\begin{example}\label{4132}The $2\times 2$ bimatrix game $\Gamma$ with payoff matrices 
\[A=\begin{pmatrix*}[r] 4 & 1 \\ 3 & 2\end{pmatrix*}\qquad B=\begin{pmatrix*}[r] 1 & 2\\ 4 & 3 \end{pmatrix*}\]
is non-degenerate and hence the leader's equilibrium payoff is unique in both associated leadership games $\gi$ and $\gii$. The unique Nash equilibrium of the simultaneous move game is $\(x^N,y^N\)=\(\(0.5,0.5\),\(0.5,0.5\)\)$ with payoffs $\(\alpha^N,\beta^N\)=\(2.5,2.5\)$. In $\gi$, the leader commits to his Nash equilibrium strategy, i.e. $x^L=x^N$, however he may induce the follower to use $t_3$ and hence $\alpha^L=3.5>\alpha^N=2.5$, despite the fact that $x^L=x^N$. The same is true for player II in $\gii$. 
\end{example}
The statement of Proposition \ref{twobytwo} does not hold in $2\times n$ bimatrix games with $n>2$. 
\begin{example}
The $2\times3$ bimatrix game $\Gamma$
\[A=\begin{pmatrix*}[r]3 & 2 & 0\\ 1 & 4 & -1\end{pmatrix*}\qquad B=\begin{pmatrix*}[r] -2 & -1 & 0\\ 3 & -2 & 2\end{pmatrix*}\]
has a unique Nash equilibrium $\(x^N,y^N\)=\(s_1, t_5\)$ with payoffs $\(\alpha^N,\beta^N\)=\(0, 0\)$. Since $\(A,B\)$ is non-degenerate, the leaders' equilibria payoffs are unique in $\gi$ and $\gii$. In $\gi$, $\(x^L,j^F\)=\(\(\frac13,\frac23\),t_3\)$ with $\(\alpha^L,\beta^F\)=\left(\frac53,\frac43\right)$ and $x^L \notin NE\(X\)$. Similarly in $\gii$, $\(y^L,i^F\)=\(\(0.5,0.5,0\),s_1\)$ with $\(\beta^L,\alpha^F\)=\left(0.5,2.5\right)$ and $y^L \notin NE\(Y\)$. 
\end{example}

\subsection{Follower payoffs in non-degenerate \texorpdfstring{$2\times 2$ bimatrix games}{k}}\label{follower}
\Citet{St04} provide an example of a parametric $3\times3$ bimatrix game which shows that the follower's payoff may be worse or better than his Nash equilibrium payoff in the simultaneous move game. For the $2\times 2$ bimatrix case, one may use Proposition \ref{twobytwo} and derive conditions that determine the relationship between the follower's equilibrium payoff in $\gi$ and his Nash equilibria payoffs in $\Gamma$. The derivation of similar conditions for higher dimension cases is open. \par 
So, consider a $2\times 2$ non-degenerate bimatrix game $\Gamma$ and recall from section \ref{properties} that for each $x^L\in X^L$, the corresponding follower's payoff is $\beta^F:=\beta(x^L,j^F)$, where $j^F \in J$ such that $\alpha(x^L, j^F)=\alpha^L$. Since $j^F \in \brii(x^L)$, we have $\beta^F=\max_{j\in J}\beta(x^L,j)\ge \min_{x\in X}\max_{j\in J}\beta\(x,j\)=v_B$, i.e. any equilibrium payoff of the follower in $\gi$ is at least as high as his matrix game value (obviously, this is true for any bimatrix game and not just for $2\times2$). \par
If $x^L\in NE\(X\)$, then the follower's payoff $\beta^F$ against $x^L$ in $\gi$ is equal to his payoff in some Nash equilibrium of $\Gamma$ (the one of which $x^L$ is a component). On the other hand, if there exists $x^L\notin NE\(X\)$, then the corresponding follower's payoff $\beta^F$ may be lower than any $\beta^N\in NE\Pi\(Y\)$. However, by Lemmas \ref{allndeg} and \ref{domin}, this may occur only under conditions $\(\ell1\)$ and $\(\ell2\)$.
\begin{proposition}\label{followerp}
In a non-degenerate $2\times 2$ bimatrix game $\Gamma$, if conditions $\(\ell1\)$ and $\(\ell2\)$ hold for player \I in $\gi$, then there exists an equilibrium payoff $\beta^F$ of the follower in $\gi$ that is lower than his unique Nash equilibrium payoff in $\Gamma$ (i.e. $\beta^F<\beta^N$) if and only if $v_B<\beta^N$. If $\(\ell2\)$ holds with strict inequality, then this $\beta^F$ is the unique equilibrium payoff of the follower.
\end{proposition}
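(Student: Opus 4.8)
The plan is to work within the normalization already established in the proof of Lemma \ref{domin}. Under $(\ell 1)$, $(\ell 2)$ and non-degeneracy for player I, after relabeling I may assume $a_3>a_1$, $a_4>a_2$ (so that $s_1$ is the strongly dominated row) and $b_1>b_2$, $b_3<b_4$; the game then has the unique Nash equilibrium $(s_2,t_4)$, so $\beta^N=b_4$, and the equalizing strategy $x^d=(1-d,d)$ with $d=(b_1-b_2)/[(b_1-b_2)+(b_4-b_3)]\in(0,1)$ lies in $X^L$ and induces $t_3$. The first step is to enumerate the possible follower payoffs. Since $\alpha^L=\max\{(1-d)a_1+da_3,\,a_4\}$, the set $X^L$ is $\{x^d\}$ when $(\ell 2)$ is strict and $\{x^d,s_2\}$ when it holds with equality, with associated follower payoffs $\beta(x^d)$ and (in the equality case) $b_4=\beta^N$. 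Thus the only follower payoff that can fall strictly below $\beta^N$ is $\beta^F=\beta(x^d)$, and because $x^d$ equalizes player II, $\beta(x^d)=(1-d)b_2+db_4=(1-d)b_1+db_3$.

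The core of the proof is the elementary identity $\beta(x^d)-b_4=(1-d)(b_2-b_4)$, which holds because $d<1$ and shows that $\beta^F<\beta^N$ precisely when $b_2<b_4$. For the forward implication I would use the general bound $v_B\le\beta^F$ recorded immediately before the statement: if any realized follower payoff lies below $\beta^N$ it must be $\beta(x^d)$, whence $v_B\le\beta(x^d)<\beta^N$. The converse carries the real content and I expect it to be the \emph{main obstacle}, since a priori $v_B<\beta^N$ could coexist with $\beta(x^d)=\beta^N$. The clean way to exclude this is the contrapositive $b_2\ge b_4\Rightarrow v_B\ge b_4$: committing to the pure column $t_4$ guarantees player II $\min_i\beta(i,t_4)=\min\{b_2,b_4\}=b_4$, so $v_B\ge b_4=\beta^N$. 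Hence $v_B<\beta^N$ forces $b_2<b_4$, and then the identity yields $\beta^F=\beta(x^d)<\beta^N$; this payoff is genuinely attained because $x^d\in X^L$ induces $t_3$ with $\alpha(x^d,t_3)=\alpha^L$, so $(x^d,t_3)$ is a subgame perfect equilibrium outcome of $\gi$.

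The uniqueness clause then falls out of the bookkeeping already done. When $(\ell 2)$ is strict, $\alpha^L=(1-d)a_1+da_3>a_4$ is attained only at $x^d$, because $\alpha(\cdot,t_3)$ increases strictly along the edge $C(3)$ toward $x^d$ while on $C(4)$ it is bounded above by $a_4<\alpha^L$; thus $X^L=\{x^d\}$ and $\beta^F=\beta(x^d)$ is the unique follower equilibrium payoff. The two points that need care are matching each $x^L\in X^L$ to the best reply $j^F$ realizing $\alpha^L$ so that the correct $\beta^F$ is read off, and confirming that $\beta(x^d)$ is an actual follower equilibrium payoff rather than a bare value of $\beta$; both are settled by the observation that at $x^d$ the induced reply is $t_3$, at which the leader attains $\alpha^L$.
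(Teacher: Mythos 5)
Your proof is correct and follows essentially the same route as the paper's: the same normalization inherited from Lemma \ref{domin}, the identification of the candidate payoff $\beta^F=\beta^d$ at the equalizer $x^d\in X^L$, and the bound $v_B\le\beta^F$ for the forward direction. The only (cosmetic) difference is in the converse, where you use the representation $\beta^d=(1-d)b_2+db_4$ together with the pure-column guarantee $v_B\ge\min\{b_2,b_4\}$, instead of the paper's case analysis on $v_B=\min\{b_1,\beta^d,b_4\}$; both reduce to the same convex-combination observation.
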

\begin{proof}By Lemma \ref{domin}, in a $2\times 2$ non-degenerate bimatrix game, there exists $x^L\notin NE\(X\)$ if and only if conditions $\(\ell1\)$ and $\(\ell2\)$ hold. In this case, as shown in the proof of Lemma \ref{domin}, one may assume -- if necessary by rearranging the order of strategies in $I,J$ -- that $a_3>a_1, a_4>a_2$ and $b_1>b_2,b_3<b_4$ and  the only Nash equilibrium of $\Gamma$ is the strategy pair $\(s_2,t_4\)$ with payoffs $\(a_4,b_4\)$. The safety level $v_B$ of player II in $\Gamma$ is given by \[v_B=\min_{x\in X}\max_{j\in J}{\beta\(x,j\)}=\min{\left\{b_1,\beta^d, b_4\right\}}\] where $\beta^d$ denotes player II's payoff against player I's equalizing strategy, which is constant over $y\in Y$ and equal to $\beta^d=\frac{\det\(B\)}{b_1+b_4-b_2-b_3}$. Additionally, $\(\ell2\)$ implies that $x^d\in X^L$, (if $\(\ell2\)$ holds strictly, $X^L=\left\{x^d\right\}$). In any case, $x^d\notin NE\(X\)$. The payoff $\beta^F$of the follower against $x^d$ is equal to $\beta^d$. Hence, $\beta^F<\beta^N$ implies $v_B<\beta^N$. For the other direction, if $v_B<\beta^N \(=b_4\)$, then either $\beta^d<b_4$ (and hence $\beta^F<\beta^N$) or $b_1<b_4$. In the latter case, since $\beta^F=\beta^d=\beta\(x^d,t_3\)=b_1+d\(b_3-b_1\)$, $b_1<b_4$, $b_3<b_4$, and $b_4=\beta^N$, we again conclude that $\beta^F<\beta^N$.
\end{proof}
If both players are better off in $\gi$ or in $\gii$ than in any Nash equilibrium of the simultaneous move game, we may reason that they will have a strong incentive to play the game sequentially with the specified order, casting doubt to the Nash equilibrium as a possible outcome/solution concept for the game. 

\Urlmuskip=0mu plus 1mu
\bibliographystyle{plainnat}\bibliography{biblf}
\end{document}